\documentclass[a4paper, USenglish, cleveref, autoref, thm-restate]{lipics-v2021}

\title{Randomized Separations in Black-Box TFNP}

\author{Fedor {Kiselev}}{MIPT, Moscow, Russia}{kiselev.fa@phystech.edu}{https://orcid.org/0009-0008-0515-2456}{}

\authorrunning{F. Kiselev} 

\Copyright{Fedor Kiselev} 

\ccsdesc[500]{Theory of computation~Problems, reductions and completeness}
\ccsdesc[500]{Theory of computation~Complexity classes} 

\keywords{TFNP, Pigeonhole Principle} 

\nolinenumbers 

\hideLIPIcs
\usepackage[classfont=sanserif,langfont=caps,funcfont=roman]{complexity}

\usepackage{csquotes}
\usepackage{mathtools}

\let\emptyset\varnothing

\newcommand{\eps}{\varepsilon}

\let\temp\phi
\let\phi\varphi
\let\varphi\temp

\newcommand{\restrharpoon}{\mathord{\upharpoonright}}

\newcommand{\smallupuparrows}{\raisebox{0.0ex}{\scalebox{1}[0.7]{$\upuparrows$}}}

\newcommand{\agree}[1]{\overset{#1}{\smallupuparrows}}

\DeclarePairedDelimiter\floor{\lfloor}{\rfloor}

\DeclarePairedDelimiter\autobracket{(}{)}
\newcommand{\br}[1]{\autobracket*{#1}}

\newclass{\TFAP}{TFAP}
\newclass{\PWPP}{PWPP}
\newlang{\Pigeon}{Pigeon}
\newlang{\InjPigeon}{InjPigeon}
\newlang{\BijPigeon}{BijPigeon}
\newlang{\Lonely}{Lonely}
\newlang{\Lossy}{Lossy}

\usepackage{tikz}
\usepackage{caption}
\usepackage{subcaption}
\usetikzlibrary{positioning}
\usetikzlibrary{arrows.meta}
\definecolor{processblue}{cmyk}{0.96,0,0,0}
\usetikzlibrary{matrix}
\usetikzlibrary{decorations.pathreplacing,angles,quotes}
\usetikzlibrary{calc}

\tikzset{%
	>={Latex[width=1mm,length=1mm]},
}
\tikzstyle{circ} = [draw, circle, fill=gray!15, minimum size=3mm]
\tikzstyle{square} = [draw, rectangle, fill=gray!15, minimum size=3mm]
\tikzstyle{arrow} = [-{Stealth}]
\tikzstyle{bi-arrow} = [{Stealth}-{Stealth}]
\def\shift1{3mm}

\begin{document}

\maketitle

\begin{abstract}
We study the relationship between deterministic and randomized black-box reducibility between problems in $\TFNP$. Our main contribution is a general technique that establishes equivalence between these reducibility types from specific $\TFNP$ problems to any $\TFNP$ problem. In particular, we show that this equivalence holds for reductions from complete problems in $\PPP$, $\PPAD$, $\PPA$, and $t$-$\PPP$. In turn, it strengthens all known black-box separations, originating from these classes, to randomized separations.
\end{abstract}

\section{Introduction}
$\TFNP$ is the class of search problems such that for any input there is at least one correct solution, and the correctness of solutions can be checked in polynomial time. It contains many important problems, such as integer factorization and finding a Nash equilibrium. Being a semantic class, $\TFNP$ is conjectured to have no complete problems \cite{Pap94}. Consequently, most of the research is focused on studying subclasses of $\TFNP$, defined by complete problems in them. The most well-studied of these classes are $\PLS$ \cite{JPY88}, $\PPP$, $\PPA$, $\PPAD$, and $\PPADS$ \cite{Pap94}. 

It is straightforward to show that $\PPAD \subset \PPADS \subset \PPP$ and $\PPAD \subset \PPA$ \cite{Pap94}, and it is conjectured that there are no other inclusions between them. However, proving such separations unconditionally would imply $\P \neq \NP$. Instead, we can look at black-box analogs of these classes. Since all known inclusions between $\TFNP$ classes also hold for their black-box analogs, separations between black-box classes rule out known techniques for showing inclusion.

After years of progress, for every pair of the above mentioned classes, it has been shown that either the inclusion holds unconditionally or it does not hold in the black-box setting \cite{BCE98, Mor01, Mor04, GHJ24}. Subsequently, this result was extended to all possible intersections of these classes \cite{Feasible_Disjunction, Intersection_Classes}. Naturally, a question arises whether these black-box separations could be further strengthened. One possible way to do this is to show that these separations persist even if we consider randomized reductions between complete problems in classes instead of deterministic ones. Such stronger separations were achieved for some pairs of classes: in \cite{Li24} it was shown that in the black-box setting there are no randomized reductions from any of the above mentioned $\TFNP$ subclasses to $\PWPP$ (introduced in \cite{JERABEK2016380}), and in \cite{Jain2024OnPP} it was shown that there is no randomized reduction from $\PPP$ to the newly introduced class $n$-$\PWPP$. In addition, for most of the studied $\TFNP$ classes, it is easy to show that there is no randomized reduction from them to $\FP$. A notable exception is the problem $\Lossy$, introduced in \cite{Korten_Derandomization}, which admits a randomized reduction to $\FP$ but no deterministic one. It demonstrates that these two relations do not always coincide. Overall, randomized reducibility in the black-box $\TFNP$ remains largely unexplored.

\subsection{Our Results}

We investigate the conditions under which it is possible in the black-box model to strengthen a deterministic separation to a randomized one. Our main result, stated in \autoref{theorem:main}, shows that for certain well-structured problems, if there is no reduction to some $S \in \TFNP$, there is also no randomized reduction to $S$. Notably, $S$ is an arbitrary problem in $\TFNP$ with no additional restrictions. The proof technique applies to highly symmetric problems. In addition, it requires that, under particular restrictions, the problem can be reduced to a smaller instance of the same problem. Many central $\TFNP$ problems possess these required structural properties.

We then demonstrate that all the requirements of \autoref{theorem:main} are met for certain complete problems in classes $\PPP, \PPA, \PPAD$ \cite{BCE98}, and $t$-$\PPP$ \cite{Jain2024OnPP}. In particular, it automatically strengthens all known separations originating from them to randomized separations.

\subsection{Connection to Proof Complexity}

Black-box $\TFNP$ has a very tight connection with proof complexity. For many $\TFNP$ subclasses it has been shown that they are characterized by some proof system \cite{GHJ24}. That is, problem $S \in \TFNP$ is in the subclass if and only if the totality of $S$ has an easy proof in the proof system corresponding to that subclass. Moreover, in \cite{TSPinZPP} it was shown that this connection can be extended to randomized reductions, using randomized versions of proof systems. More specifically, problem $S \in \TFNP$ is randomized-reducible to a complete problem in some subclass if and only if the totality of $S$ has an easy proof in the randomized version of the corresponding proof system. This connection can be used to translate our results to proof complexity. Let us call a proof system well-behaved if it has corresponding proof system (see more on that in \cite{TFNPchar}). Then for any problem $S$ such that it meets the conditions of \autoref{theorem:main}, if the totality of $S$ is hard to prove in some well-behaved proof system, then it is also hard to prove in its randomized version.

Finally, it should be mentioned that the result, similar to ours, was achieved in \cite{RandomResRef}, where it was essentially shown that $\PPP$ is not randomized-reducible to $\PLS$.

\subsection{Paper Organization}
The remainder of the paper is structured as follows. In Section 2 we introduce most of the required definitions. In Section 3 we prove our main theorem, as well as a few auxiliary results. In Section 4 we define classes $\PPP, \PPAD, \PPA$ and $t$-$\PPP$ and demonstrate that the main theorem is applicable to them. In Section 5 we briefly discuss limitations of the main theorem and future directions.

\section{Preliminaries}
\subsection{Black-Box TFNP}
Black-box classes are usually distinguished from their white-box analogs by using superscript \enquote{dt}, which stands for \enquote{decision tree}. Since our work is dedicated to black-box classes only, we will omit this superscript.

\begin{definition}
    A \textit{search problem} is a sequence $R = \{R_N\}_{N\in \mathbb{N}}$, where $R_N \subset \Sigma_N^{l_N} \times O_N$ for some finite \textit{alphabet} $\Sigma_N$, a finite \textit{set of solutions} $O_N$, and \textit{input length} $l_N$. Any $x \in \Sigma_N^{l_N}$ is called the \textit{input} of $R_N$. A solution $o$ is \textit{correct} for an input $x$ if $(x,o)\in R_N$. A search problem is considered \textit{total} if, for any input, there is a correct solution. Search problem $R$ belongs to $\TFNP$ if it is total and there is a family of decision trees $T = \{T_{N, o}\}_{N\in \mathbb{N}, o\in O_N}$ of depth $\poly(\log N)$ such that $(x,o)\in R_N \Leftrightarrow T_{N,o}(x) = 1$.
\end{definition}

Note that this definition is more general than the commonly used one, where $\Sigma_N = \{0,1\}$ and $l_N = N$. We believe that this generalization makes reasoning easier since it allows us to omit the binary encoding of input. All problems defined in \autoref{section:application} can be transformed into equivalent problems with $\Sigma_N = \{0,1\}$ and $l_N = N$.

We will sometimes abuse the notation by calling $R_N$ a search problem. Also, if there are several search problems in the context, we will distinguish related objects by subscript letters.

\begin{definition}
    Given search problems $R_N$ and $S_M$, a \textit{pseudo-reduction} from $R_N$ to $S_M$ is a pair $(f = \{f_i\}_{i\in [l_M]}, g = \{g_o\}_{o\in O_M})$ of families of decision trees over inputs of $R_N$. Each $f_i$ defines a mapping $\Sigma_N^{l_N}\to \Sigma_M$; each $g_o$ defines a mapping $\Sigma_N^{l_N}\to O_N$. We say that the pseudo-reduction is successful on input $x$ of $R_N$ if:
    \begin{equation*}
        \forall o \in O_M : [(x, g_o(x)) \in R_N \Leftarrow (f(x), o) \in S_M],
    \end{equation*}
    where $f(x) = f_1(x)f_2(x)\ldots f_{l_M}(x)$. The \textit{depth} $d$ of the pseudo-reduction is the maximum depth of decision trees in $f$ and $g$. The \textit{complexity} of the pseudo-reduction is $\max(d, \log M)$. A \textit{reduction} from $R$ to $S$ is a sequence of pseudo-reductions from $R_N$ to $S_M$ for $M = M(N)$, such that $N$-th pseudo-reduction is successful on all inputs of $R_N$ and has complexity $\poly(\log N)$.
\end{definition}

Note that pseudo-reduction is not an established term. We introduce it to explicitly distinguish the cases where we have no guarantee that $(f,g)$ has small complexity and is successful on all inputs.

\begin{definition}
    Given search problems $R_N$ and $S_M$, a \textit{randomized pseudo-reduction} from $R_N$ to $S_M$ is a distribution $D$ over some finite set of pseudo-reductions from $R_N$ to $S_M$. The \textit{success rate} on input $x$ is:
    \begin{equation*}
        \mathbb{P}_{(f,g)\sim D} [\forall o \in O_M : (x, g_o(x)) \in R_N \Leftarrow (f(x), o) \in S_M].
    \end{equation*}
    The depth of randomized pseudo-reduction is equal to the maximum depth of pseudo-reductions in $D$. Complexity is defined in the same way. A \textit{randomized reduction} from $R$ to $S$ is a sequence of randomized pseudo-reductions from $R_N$ to $S_M$ for $M = M(N)$, such that $N$-th randomized pseudo-reduction has a success rate of at least $\frac{1}{\poly(\log N)}$ and complexity $\poly(\log N)$.
\end{definition}

This definition can be seen as one-round communication with an adversary, who, given an arbitrary $x$ and sampled $(f,g)$, tries to provide worst possible solution for $f(x)$. Note that the adversary knows which $(f,g)$ was sampled, which makes it similar to public coin communication. As far as we know, this definition of randomized reduction between $\TFNP$ problems is most common, so we are going to use it. Nonetheless, it would be interesting to see if this result holds for private coin definition.

Also note that some definitions of randomized reductions require a higher success rate threshold. We have chosen one with a weaker requirements. Since we are going to show that there is no randomized reduction, our result will also work for more demanding definitions.

We are interested in proving that there is no randomized reduction. For this purpose, it is more convenient to use an equivalence, which follows from Yao's minimax principle \cite{Yao77}.

\begin{lemma}\label{lemma:randomized-reducibility-alt}
    There is a randomized reduction from a search problem $R$ to a search problem $S$ if and only if there are $p(N) = \poly(\log N)$ and $M = M(N)$ such that for any $N$ and distribution $\mathcal{X}$ over input of $R_N$ there is a pseudo-reduction $(f,g)$ from $R_N$ to $S_M$ with complexity at most $p(N)$ and success rate at least $\frac{1}{p(N)}$. 
\end{lemma}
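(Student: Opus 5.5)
The plan is to treat this as a direct application of von Neumann's minimax theorem (Yao's principle) to a finite zero-sum game. Throughout, the success rate of a deterministic pseudo-reduction $(f,g)$ under a distribution $\mathcal{X}$ is understood as $\mathbb{P}_{x\sim\mathcal{X}}[(f,g)\text{ is successful on }x]$. We fix $N$ and $M=M(N)$ and prove the two directions separately.

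\emph{($\Rightarrow$).} Suppose we have a randomized reduction. For each $N$, let $D$ be its $N$-th randomized pseudo-reduction, with success rate at least $1/p(N)$ on every input and complexity at most $p(N)$. Fix any distribution $\mathcal{X}$ over inputs of $R_N$ and average over $x\sim\mathcal{X}$. Exchanging the order of expectation gives
\[
\mathbb{E}_{(f,g)\sim D}\,\mathbb{P}_{x\sim\mathcal{X}}[\text{success}] = \mathbb{E}_{x\sim\mathcal{X}}\,\mathbb{P}_{(f,g)\sim D}[\text{success}] \ge 1/p(N).
\]
Hence some $(f,g)$ in the support of $D$ has success rate at least $1/p(N)$ under $\mathcal{X}$. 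Every pseudo-reduction in the support of $D$ has complexity at most $p(N)$, so this $(f,g)$ is the required one.

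\emph{($\Leftarrow$).} Consider the zero-sum game defined as follows.
\begin{itemize}
\item The row player chooses a pseudo-reduction $(f,g)$ from $R_N$ to $S_M$ of depth at most $p(N)$.
\item The column player chooses an input $x\in\Sigma_N^{l_N}$.
\item The payoff is $1$ if $(f,g)$ is successful on $x$, and $0$ otherwise.
\end{itemize}

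The point requiring care is that both strategy sets are finite, so that the minimax theorem applies. The column set is $\Sigma_N^{l_N}$, which is finite. For the row set, note that a decision tree of depth at most $p(N)$ is determined by finitely many choices: the queried index in $[l_N]$ at each node, branching over the finite alphabet $\Sigma_N$, and leaf labels from the finite set $\Sigma_M$ or $O_N$. There are also only finitely many trees to specify, namely $l_M$ trees $f_i$ and $|O_M|$ trees $g_o$. So there are finitely many such pairs $(f,g)$. Moreover, $M$ is fixed, so the $\log M$ term in the complexity is unaffected by which pseudo-reduction is chosen.

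The hypothesis states that for every mixed column strategy $\mathcal{X}$, some pure row strategy achieves payoff at least $1/p(N)$. Hence
\[
\min_{\mathcal{X}}\max_{(f,g)} \text{payoff} \ge 1/p(N).
\]
By the minimax theorem, there is a mixed row strategy $D$ with
\[
\min_{x}\,\mathbb{E}_{(f,g)\sim D}[\text{payoff}] \ge 1/p(N).
\]
In words, $D$ succeeds on every input with probability at least $1/p(N)$. Since $D$ is a distribution over a finite set of pseudo-reductions, each of complexity at most $p(N)$, it is a randomized pseudo-reduction. Doing this for each $N$ gives the randomized reduction.

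The only genuine obstacle is the finiteness bookkeeping needed to invoke minimax; the rest is linearity of expectation.
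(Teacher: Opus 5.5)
Your proposal is correct and is exactly the argument the paper intends: the paper gives no written proof and only attributes the lemma to Yao's minimax principle, which you carry out in full (averaging for ($\Rightarrow$), and von Neumann's theorem on the finite game for ($\Leftarrow$), with the needed finiteness of both strategy sets). One detail is worth stating explicitly: the hypothesis itself forces $\log M(N) \le p(N)$, so the mixed strategy you obtain has complexity at most $p(N)$, not merely depth at most $p(N)$.
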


Finally, sometimes we limit our search problem $R$ to some family of inputs $X = \{X_N \subset \Sigma_N^{l_N}\}_{N \in \mathbb{N}}$. We denote it as $(R,X)$. In that case, for any pseudo-reduction $(f,g)$ from $(R_N,X_N)$ to $(S_M,Y_M)$ we require that $x\in X \Rightarrow f(x) \in Y$; also, for a reduction, we require success only on $x \in X_N$.

\subsection{Partial Assignment}

\begin{definition}
    A \textit{partial assignment} on inputs of the search problem $R_N$ is any element $\rho \in (\Sigma_N \cup \{*\})^{l_N}$. The \textit{size} of $\rho$ is $|\rho| = |\{i \mid \rho_i \neq *\}|$. We say that $\rho_1$ \textit{extends} $\rho_2$ (denoted $\rho_1 \sqsupset \rho_2$) if they agree in all non-$*$ positions of $\rho_2$. Partial assignments $\rho_1$ and $\rho_2$ are \textit{consistent} (denoted $\rho_1 \agree{} \rho_2$) if there is an input that extends both of them; \textit{consistent relative to} $X_N \subset \Sigma_N^{l_N}$ ($\rho_1 \agree{X_N} \rho_2$) if such an input exists in $X_N$. 
\end{definition}

Partial assignments represent information we know about an input of a search problem. They are particularly useful when we are working with decision trees. A partial assignment naturally corresponds to every path from the root to a leaf in a decision tree. This partial assignment consists of answers to all queries made on this path of the decision tree.

\begin{definition}
    Given the search problem $R_N$ and a partial assignment $\rho$, we say that $\rho$ \textit{witnesses} a solution $o \in O_N$ if $(x,o) \in R_N$ for all inputs $x$ extending $\rho$. A partial assignment $\rho$ is called \textit{witnessing} if it witnesses some solution.
\end{definition}

\subsection{Properties of Pseudo-Reductions} \label{subsection:reduction-properties}
In this section, we introduce definitions for honest and perceptive pseudo-reductions. To the best of our knowledge, we are the first to introduce these properties. They are useful for us because they simultaneously are easy to enforce and guarantee a nice structure for a set of inputs on which a pseudo-reduction is not successful. 

Without loss of generality, we assume that all studied search problems contain a dedicated always incorrect solution $\bot$. This allows a pseudo-reduction to explicitly show that it does not know a correct solution.

\begin{definition}
    We call a pseudo-reduction $(f,g)$ from $R_N$ to $S_M$ \textit{honest} if, for all $o' \in O_M$ and $x \in \Sigma_N^{l_N}$, either $g_{o'}(x) = \bot$ or the partial assignment corresponding to the computation of $g_{o'}$ on $x$ witnesses the solution $g_{o'}(x)$.
\end{definition}

Intuitively, this property means that whenever the pseudo-reduction is unable to find a correct solution, it says so instead of trying to guess it. Any pseudo-reduction can be transformed into an honest one with an overhead of $\poly(\log N)$ in depth. To achieve this, for every $o'\in O_M$ and for every leaf in $g_{o'}$, we modify $g_{o'}$ so that it additionally checks if the solution in this leaf is correct, and if not, it returns $\bot$ instead. Since $R \in \TFNP$, this check can be made by a decision tree of $\poly(\log N)$ depth. Note that if the pseudo-reduction was successful on an input $x$, it will remain so.

\begin{definition}
Let $(f,g)$ be a pseudo-reduction from $R_N$ to $S_M$, where $R,S \in \TFNP$. Let us fix some family of decision trees $\mathcal{T}_M = \{T_{M,o}\}_{o\in O_M}$ that checks the correctness of solutions for $S_M$. We call a pseudo-reduction $(f,g)$ \textit{perceptive} (with respect to $\mathcal{T}_M$) if:
\begin{enumerate}
    \item For any leaf $l$ in $g_o$, if the corresponding partial assignment is witnessing, then $l$ returns any witnessed solution.
    \item Let $\overline{T}_{M,o}$ be the decision tree over the inputs of $R_N$, constructed from $T_{M,o}$ by substituting every query to $i$-th position of an input of $S_M$ with the tree $f_i$. We require that $g_o$ contains $\overline{T}_{M,o}$ in the sense that any path in $g_o$ as a partial assignment extends some path in $\overline{T}_{M,o}$.
\end{enumerate}
\end{definition}
This is complementary to the honesty property, which intuitively means that if a pseudo-reduction has enough information to give a correct solution, it does so.

We will usually omit the choice of the family of decision trees that checks the correctness of a solution. All we want from it is to check accurately and to have polylogarithmic depth. With respect to such a family, any pseudo-reduction can be transformed into a perceptive one with at most a polylogarithmic blowup in size. To achieve this, we apply the following operations:
\begin{enumerate}
    \item We extend every $g_o$ by replacing all leaves with the decision tree $\overline{T}_{M,o}$. New leaves inherit the value of the replaced leaf.
    \item In every leaf in the new decision trees, if the corresponding path is witnessing, we replace the leaf's value with any witnessed solution.
\end{enumerate} 
Note that if the pseudo-reduction was successful on an input $x$, it will remain so. Moreover, this operation preserves honesty.

Now we are ready to show why these properties are useful. For a pseudo-reduction of $R_N$ to $S_M$, we call an input of $R_N$ \textit{bad} if the pseudo-reduction is not successful on it. The following lemma shows how such inputs are distributed for an honest and perceptive pseudo-reduction.
\begin{lemma}
    \label{lemma:bad-input-structure}
    Let $(f,g)$ be an honest and perceptive pseudo-reduction of $R_N$ to $S_M$ of depth $d$. Then, for any bad input $x$, there is a non-witnessing partial assignment $\rho$ such that:
    \begin{itemize}
        \item $|\rho| \leq d$,
        \item $\rho \sqsubset x$,
        \item Any $x'$ such that $\rho \sqsubset x'$ is also bad.
    \end{itemize}
\end{lemma}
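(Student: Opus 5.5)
The plan is to take $\rho$ to be the path that the tree $g_o$ follows on $x$, where $o$ is a solution of $S_M$ on $f(x)$ at which the pseudo-reduction fails.

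Since $x$ is bad, there is some $o \in O_M$ with $(f(x),o) \in S_M$ but $(x, g_o(x)) \notin R_N$. Fix this $o$ and let $\rho$ be the partial assignment of the root-to-leaf path of $g_o$ on $x$. Then $\rho \sqsubset x$ is immediate, and $|\rho| \le d$ because the depth of $g_o$ is at most $d$.

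Next I check that $\rho$ is non-witnessing. Suppose $\rho$ witnessed some solution. By condition 1 of perceptiveness, the leaf reached would then return a witnessed solution $o^*$. Since $x$ extends $\rho$, this gives $(x,o^*) \in R_N$, i.e. $(x, g_o(x)) \in R_N$, contradicting the choice of $o$.

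The key step is showing that every $x' \sqsupset \rho$ is bad, and I will use the same $o$ as the failing solution for $x'$.

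\begin{itemize}
\item \emph{$g_o$ fails on $x'$.} Since $g_o$ is a decision tree and $x'$ agrees with $x$ on every query along the path $\rho$, $g_o(x')$ reaches the same leaf, so $g_o(x') = g_o(x)$. By honesty, either this value is $\bot$, which is always incorrect, or $\rho$ witnesses it. The second option is impossible because $\rho$ is non-witnessing. Hence $(x', g_o(x')) \notin R_N$.
\item \emph{$o$ is still a solution for $f(x')$.} By condition 2 of perceptiveness, $\rho$ extends some path $\pi$ of $\overline{T}_{M,o}$. Both $x$ and $x'$ extend $\pi$, so $\overline{T}_{M,o}(x') = \overline{T}_{M,o}(x)$. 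By construction, $\overline{T}_{M,o}(y) = T_{M,o}(f(y))$, and $T_{M,o}(f(x)) = 1$. Therefore $(f(x'), o) \in S_M$.
\end{itemize}

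Together these show that $x'$ is bad.

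The only delicate point is justifying that $\overline{T}_{M,o}$ evaluated on $y$ equals $T_{M,o}(f(y))$, with the path taken determined by the queried bits of $y$. This is routine from the substitution construction: each query to position $i$ of the $S_M$-input is answered by running $f_i$ on $y$, so the path of $\overline{T}_{M,o}$ on $y$ is determined by, and consistent with, $y$'s answers.
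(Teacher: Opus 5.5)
Your proof is correct and follows the paper's argument essentially step for step: you take $\rho$ to be the computation path of $g_o$ on $x$ for a failing $o$, use perceptiveness (condition 1) for non-witnessing, honesty to force the output $\bot$ on every $x' \sqsupset \rho$, and perceptiveness (condition 2) to get $T_{M,o}(f(x')) = T_{M,o}(f(x)) = 1$. Your added justification that $\overline{T}_{M,o}(y) = T_{M,o}(f(y))$ and that $x, x'$ follow the same path $\pi$ just spells out what the paper leaves implicit.
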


\begin{proof}
    Since $x$ is bad, there is $o \in O_M$ such that $(f(x), o)\in S_M$, but $(x, g_o(x)) \notin R_N$. Let $\rho$ be the partial assignment corresponding to the execution of $g_o$ on $x$. It is clear to see that $|\rho| \leq d$ and $\rho \sqsubset x$. Since the pseudo-reduction is perceptive and $(x, g_o(x)) \notin R_N$, $\rho$ must be non-witnessing. Also, since our pseudo-reduction is honest, $g_o(x)=\bot$. Then, for any $x' \sqsupset \rho$, we have $(x', g_o(x')) = (x', \bot) \notin R_N$. Lastly, $(f(x'),o)\in S_M$ since $T_{M,o}(f(x'))=T_{M,o}(f(x))=1$ due to perceptiveness.
\end{proof}

\begin{corollary}
    \label{corollary:bad-input-structure}
    We can apply this lemma to every bad input and combine the resulting $\rho$ into one set $B$. Then $B$ is a set of non-witnessing partial assignments with sizes at most $d$ such that $x$ is bad if and only if it extends some $\rho \in B$.
\end{corollary}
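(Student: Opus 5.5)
The plan is to apply \autoref{lemma:bad-input-structure} to each bad input separately and take $B$ to be the set of partial assignments it produces. Concretely, I let $\mathrm{Bad} \subset \Sigma_N^{l_N}$ denote the set of inputs on which $(f,g)$ is not successful. For every $x \in \mathrm{Bad}$, the lemma gives a non-witnessing partial assignment $\rho_x$ with $|\rho_x| \le d$, $\rho_x \sqsubset x$, and the property that every extension of $\rho_x$ is bad. I then set $B = \{\rho_x \mid x \in \mathrm{Bad}\}$. By construction, every element of $B$ is non-witnessing and has size at most $d$, so the only thing left to prove is the characterization of bad inputs.

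For that equivalence I would check the two directions separately. If $x$ is bad, then $\rho_x \in B$ and $\rho_x \sqsubset x$, so $x$ extends some element of $B$. Conversely, suppose $x$ extends some $\rho \in B$. Then $\rho = \rho_{x'}$ for some bad input $x'$, and the third bullet of \autoref{lemma:bad-input-structure}, applied to $x'$, says that every input extending $\rho_{x'}$ is bad; in particular $x$ is bad. Both directions follow immediately, so I do not expect a real obstacle. The one point to handle carefully is that the third property of the lemma is a statement about all extensions of $\rho_x$, not only about $x$ itself, and that is exactly what the converse direction needs.

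If the corollary is later used relative to a restricted input family $X_N$, the same argument works unchanged, with the bad inputs taken inside $X_N$ and extensions considered within $X_N$.
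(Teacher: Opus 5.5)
Your proposal is correct and follows exactly the argument the paper intends: the corollary is stated without a separate proof, and its content is just collecting the $\rho_x$ from \autoref{lemma:bad-input-structure} over all bad inputs $x$. Your check of the converse direction, which uses that every extension of $\rho_{x'}$ is bad rather than only $x'$ itself, is the only point that needs stating, and you handle it correctly.
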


\subsection{Search Problem Restriction}

\begin{definition}
    Given a search problem $R_N$, a set of inputs $X_N$, and a partial assignment $\rho$, a \textit{restriction} of the pair $(R_N, X_N)$ by $\rho$ (denoted as $(R_N, X_N)\restrharpoon \rho$) is a pair $(R'_N, X'_N)$, where $\Sigma'_N = \Sigma_N$, $l'_N = l_N - |\rho|$, $O'_N=O_N$, $X'_N \subset \Sigma_N^{l'_N}$ is a set of strings such that their substitution in $\rho$ in place of all $*$ gives an element of $X_N$, $R'_N = \{(x',o) \mid$ substituting $x'$ into $\rho$ gives $x$ such that $(x,o) \in R_N\}$.
\end{definition}

If we have a pseudo-reduction $(f,g)$ from $(R_N, X_N)$ to a search problem $S_M$, then we can also apply the restriction to $(f,g)$ (denoted as $(f,g)\restrharpoon \rho$) to get a pseudo-reduction from $(R_N, X_N)\restrharpoon \rho$ to $S_M$. It is achieved simply by resolving in $(f,g)$ all queries that are already specified in $\rho$. Moreover, the following relation holds:

\begin{lemma}\label{lemma:restriction-property}
    Given a pseudo-reduction $(f,g)$ from $(R_N, X_N)$ to a search problem $S_M$ and a partial assignment $\rho$, if $x' \in X_N \restrharpoon \rho$ is a bad input for pseudo-reduction $(f,g)\restrharpoon \rho$, then $x \in X_N$, obtained by substituting $x'$ into $\rho$, is bad for pseudo-reduction $(f,g)$.
\end{lemma}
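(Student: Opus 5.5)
The plan is to unfold the definitions: restricting both the problem and the pseudo-reduction by $\rho$ does not change anything that is actually computed on inputs extending $\rho$. Fix $x' \in X_N\restrharpoon\rho$ and let $x$ be the string obtained by substituting $x'$ into the $*$ positions of $\rho$. By the definition of $X_N\restrharpoon\rho$, $x \in X_N$, and $x \sqsupset \rho$.

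\textbf{Step 1: the restricted pseudo-reduction computes the same values.} Every decision tree $h$ among the $f_i$ and $g_o$ has a restriction $h\restrharpoon\rho$, obtained by answering each query to a position fixed by $\rho$ with the value $\rho$ prescribes. I will argue by induction along the computation path that $h\restrharpoon\rho$ on $x'$ reaches the same leaf as $h$ on $x$. At a query to a position fixed by $\rho$, $x$ carries the value $\rho$ assigns there, which is exactly the branch the restricted tree hard-wires. At any other query, the queried position of $x$ holds the corresponding coordinate of $x'$. Consequently $(f\restrharpoon\rho)(x') = f(x)$ and $(g_o\restrharpoon\rho)(x') = g_o(x)$ for every $o \in O_M$.

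\textbf{Step 2: the restricted problem has the same correct solutions.} By the definition of $R'_N$, for every $o' \in O_N$ we have $(x',o') \in R'_N$ if and only if $(x,o') \in R_N$. This uses $O'_N = O_N$.

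\textbf{Step 3: transfer badness.} Since $x'$ is bad for $(f,g)\restrharpoon\rho$, there is some $o \in O_M$ with $((f\restrharpoon\rho)(x'), o) \in S_M$ but $(x', (g_o\restrharpoon\rho)(x')) \notin R'_N$. Steps 1 and 2 turn this into $(f(x),o) \in S_M$ and $(x, g_o(x)) \notin R_N$, so $x$ is bad for $(f,g)$. If $S_M$ also carries an input restriction $Y_M$, nothing changes, since $f$ and $f\restrharpoon\rho$ produce the same instance.

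There is no real obstacle here. The only point that needs care is Step 1, specifically the index bookkeeping: the $l_N - |\rho|$ coordinates of $x'$ must be matched to the $*$ positions of $\rho$ in order, and this matching has to be the same one used both when restricting the trees and when substituting $x'$ into $\rho$. I will fix this correspondence explicitly at the start of the proof.
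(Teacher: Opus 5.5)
Your proposal is correct and follows essentially the same route as the paper. Both arguments take the witness $o$ for badness of $x'$, use $(f\restrharpoon\rho)(x') = f(x)$ and $(g_o\restrharpoon\rho)(x') = g_o(x)$, and unfold the definition of $R_N\restrharpoon\rho$ to conclude that $x$ is bad; your Step 1 simply spells out the path-agreement argument that the paper takes for granted.
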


\begin{proof}
    Since $x'$ is bad, there is $o$ such that $((f\restrharpoon \rho)(x'), o) \in S_M$, $(x', (g_o\restrharpoon\rho)(x')) \notin R_N\restrharpoon \rho$. But $(f\restrharpoon \rho)(x') = f(x)$, which means $(f(x), o) \in S_M$; and $(g_o\restrharpoon\rho)(x') = g_o(x)$, which means $(x',g_o(x)) \notin R_N\restrharpoon \rho \Leftrightarrow (x,g_o(x)) \notin R_N$.
\end{proof}

\section{Main Theorem}

\begin{theorem}[Main theorem]\label{theorem:main}
    Let $R, S \in \TFNP$, $X = \{X_N \subset \Sigma_N^{l_N}\}_{N \in \mathbb{N}}$. Let the following conditions be met:
    \begin{enumerate}
        \item \label{item:main-nonreducibility}
        There is no reduction from $(R,X)$ to $S$.
        \item For any $p(N)=\poly(\log N)$ and for any sufficiently large $N$, there are sets $\mathcal{P}_1 = \mathcal{P}_1(N)$ and $\mathcal{P}_2 = \mathcal{P}_2(N)$ of partial assignments over inputs of $R_N$ such that:
        \begin{enumerate}
            \item \label{item:main-extendability} For any non-witnessing partial assignment $\rho$ over inputs of $R_N$ with $|\rho| \leq p(N)$ and such that $\rho$ is extendable to some $x \in X_N$, there is $A_\rho \subset \mathcal{P}_1$ such that $\forall x\in X_N$ $\left(\rho \sqsubset x \Leftrightarrow \exists \tau\in A_\rho: \tau \sqsubset x\right)$.
            \item \label{item:main-isomorphysm} For any sequence $\{\kappa_N \in \mathcal{P}_2(N)\}_N$, $(R,X)$ is reducible to $\{(R_N,X_N)\restrharpoon \kappa_N\}_N$.
            \item \label{item:distibutions} For $\kappa$ taken uniformly from $\mathcal{P}_2$, and then $x$ taken uniformly from $\{x\in X_N:x \sqsupset \kappa\}$, the resulting distribution of $x$ is uniform on $X_N$.
            \item \label{item:main-extendability-vs-consistency}There is $\delta > 0$ such that for any $T \subset \mathcal{P}_1$:
            \begin{gather*}
                |\{\kappa\in\mathcal{P}_2 \mid \exists \tau \in T: \kappa \sqsupset \tau\}| \geq (1 - o(N^{-\delta}))|\{\kappa\in\mathcal{P}_2\mid \exists \tau \in T:  \kappa \agree{X_N} \tau\}|.
            \end{gather*}
        \end{enumerate}
    \end{enumerate}
    Then there is no randomized reduction from $(R,X)$ to $S$.
\end{theorem}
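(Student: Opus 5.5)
We argue by contradiction. Suppose there is a randomized reduction from $(R,X)$ to $S$. By \autoref{lemma:randomized-reducibility-alt} there are $p(N)=\poly(\log N)$ and $M=M(N)$ such that, for every distribution $\mathcal{X}$ on $X_N$, some pseudo-reduction of complexity at most $p(N)$ succeeds with probability at least $1/p(N)$. We apply this to the uniform distribution on $X_N$. After making the resulting pseudo-reduction $(f,g)$ honest and perceptive, its depth is some $d=\poly(\log N)$ and the success rate is unchanged. By \autoref{corollary:bad-input-structure}, the set of bad inputs is exactly the set of $x$ extending some $\rho\in B$, where $B$ consists of non-witnessing partial assignments of size at most $d$. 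We may discard the $\rho\in B$ not extendable to $X_N$.

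Next we invoke condition (a) with $p(N)$ replaced by $d$. Each $\rho\in B$ is replaced by $A_\rho\subset\mathcal{P}_1$, and we set $T=\bigcup_{\rho\in B}A_\rho$. Then, for $x\in X_N$, $x$ is bad iff $x$ extends some $\tau\in T$.

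We now sample $\kappa$ uniformly from $\mathcal{P}_2$ and then $x\in X_N$ uniformly among extensions of $\kappa$. By condition (c), $x$ is uniform on $X_N$. Hence
\[
\Pr[x \text{ bad}]\le 1-1/p(N).
\]
On the other hand, if $\kappa\sqsupset\tau$ for some $\tau\in T$, then every extension of $\kappa$ is bad. Condition (d) says that all but an $o(N^{-\delta})$ fraction of the $\kappa$ that are $X_N$-consistent with some $\tau\in T$ actually extend such a $\tau$. Any $\kappa$ consistent with no $\tau\in T$ has all its extensions good.

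Call $\kappa$ \emph{fully bad} if it extends some $\tau\in T$ and \emph{fully good} if it is $X_N$-consistent with no $\tau\in T$. Then all but an $o(N^{-\delta})$ fraction of $\mathcal{P}_2$ is fully bad or fully good. Since the bad probability is at most $1-1/p(N)$ and $1/p(N)\gg N^{-\delta}$, the fully good $\kappa$ form at least a $1/p(N)-o(N^{-\delta})>0$ fraction. In particular, some fully good $\kappa_N$ exists for each large $N$. For such $\kappa_N$, \autoref{lemma:restriction-property} applied in contrapositive form shows that every input of $(R_N,X_N)\restrharpoon\kappa_N$ is good for $(f,g)\restrharpoon\kappa_N$. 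The reason is that the substituted input lies in $X_N$, extends $\kappa_N$, and is therefore good.

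Thus $(f,g)\restrharpoon\kappa_N$ is a deterministic pseudo-reduction from $(R_N,X_N)\restrharpoon\kappa_N$ to $S_M$. It succeeds everywhere and has complexity at most $\max(d,\log M)=\poly(\log N)$. We compose it with the reduction from $(R,X)$ to $\{(R_N,X_N)\restrharpoon\kappa_N\}_N$ given by condition (b). Composition of black-box reductions with polylog complexity stays polylog: substitute the trees and map back the solutions. This gives a reduction from $(R,X)$ to $S$, contradicting condition 1.

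The delicate step is the counting with condition (d). We must check that the quantity it bounds, the $\kappa$ that are $X_N$-consistent with some $\tau$ but extend none, is exactly the mixed set. We also need that fully bad $\kappa$ contribute probability one to badness, so that the fully good mass is at least about $1/p(N)$. A second point to verify is uniformity of reduction sizes. Condition (b) is stated for arbitrary sequences $\kappa_N$, but the reduction's complexity must be polylog uniformly, with $M$ depending only on $N$. Since our chosen sequence is fixed and the condition gives a reduction for it, this is fine.
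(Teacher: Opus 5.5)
Your proof is correct and follows the paper's argument: honest/perceptive normalization, refining $B$ into $T\subset\mathcal{P}_1$, restricting by a $\kappa\in\mathcal{P}_2$ that is $X_N$-inconsistent with all of $T$ and combining this with condition (b), and using (c) and (d) to transfer the counting from $\mathcal{P}_2$ to $X_N$. The only difference is the order of the contrapositive. You use the assumed success rate $1/p(N)$ to force a fully good $\kappa_N$ for every large $N$ and then contradict condition 1. The paper instead uses condition 1 to show that, for infinitely many $N$, every $\kappa$ is consistent with some $\tau\in T$, and then derives a success rate of $o(N^{-\delta})$. One minor nit: normalization preserves success on every input but may increase the success rate, so it is not necessarily unchanged; a lower bound of $1/p(N)$ is all you use.
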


Before we proceed with the proof, let us informally explain what sets $\mathcal{P}_1$ and $\mathcal{P}_2$ represent and how this proof works. 

The set $\mathcal{P}_1$ consists of \enquote{small} partial assignments. Usually we want to choose $\mathcal{P}_1$ so that all its elements are non-witnessing and have the same size and structure. The purpose of this set is to refine the set of partial assignments, resulting from the application of \autoref{corollary:bad-input-structure}, into the set $T \subset \mathcal{P}_1$ with the same properties. This allows us to work with small partial assignments of the same size and structure, which is convenient.

The set $\mathcal{P}_2$ consists of \enquote{big} partial assignments. As with $\mathcal{P}_1$, we usually want to choose $\mathcal{P}_2$ so that all its elements are non-witnessing and have the same size and structure. Moreover, from $\kappa \in \mathcal{P}_2$ we expect that restricting $(R_N,X_N)$ by $\kappa$ turns it into essentially the same problem, but smaller.

For a quick example of such $\mathcal{P}_1$, $\mathcal{P}_2$ and $X_N$ we refer to \autoref{diagram:pigeon-structure}.

In general, the proof is structured as follows:
\begin{itemize}
    \item We are given $(f,g)$ with small complexity and we want to show that it performs poorly under the uniform distribution over $X_N$.
    \item Assuming $(f,g)$ is honest and perceptive, apply \autoref{corollary:bad-input-structure} to obtain the set $B$ of partial assignments. For $B$ holds that $x \in X_N$ is bad iff $\exists \rho \in B: x \sqsupset \rho$. 
    \item Use condition \ref{item:main-extendability} to refine $B$ into $T \subset \mathcal{P}_1$ with the same property as $B$.
    \item Show that (for infinitely many $N$) any $\kappa \in \mathcal{P}_2$ must be consistent (relative to $X_N$) with some $\tau \in T$. Indeed, otherwise, we can restrict our pseudo-reduction by $\kappa$ to kill all $\tau \in T$, which means that the resulting pseudo-reduction is always successful. But that would contradict condition \ref{item:main-nonreducibility}, since a restriction by $\kappa$ turns our problem into essentially the same problem, but smaller.
    \item Use condition \ref{item:main-extendability-vs-consistency} to show that almost all $\kappa \in \mathcal{P}_2$ must extend some $\tau \in T$.
    \item Use condition \ref{item:distibutions} to show that the same holds for almost all $x \in X_N$, which due to the property of $T$ means that almost all $x \in X_N$ are bad. This concludes the proof.
\end{itemize}

\begin{proof}
    Let us fix any $p(N) = \poly(\log N)$. For arbitrary $N$ let $(f,g)$ be any pseudo-reduction of $R_N$ to $S_M$ with complexity $p(N)$. Our goal is to show that this pseudo-reduction performs poorly on $x$, taken uniformly from $X_N$. We can assume $(f,g)$ to be honest and perceptive: as shown in \autoref{subsection:reduction-properties}, these properties can be achieved with polynomial blowup in complexity. Moreover, if the pseudo-reduction was successful on an input, it will remain so. Then, from \autoref{corollary:bad-input-structure}, there is a set of non-witnessing partial assignments $B$ such that $x\in X_N$ is bad if and only if it extends some $\rho \in B$. We exclude from $B$ all $\rho$ that are not extendable to any $x \in X_N$. Since all $\rho \in B$ have a size at most $p(N)$, we can use condition $\ref{item:main-extendability}$ to construct $T = \bigcup_{\rho \in B} A_\rho \subset \mathcal{P}_1$. By design, for such $T$ we also have that $x\in X_N$ is bad if and only if it extends some $\tau \in T$.
    
    Now, we proceed to show that for infinitely many $N$, any $\kappa\in \mathcal{P}_2$ is consistent relative to $X_N$ with at least one $\tau \in T$. Assume that there is $\kappa$ such that it is not consistent relative to $X_N$ with any $\tau \in T$. Then the pseudo-reduction $(f_N,g_N)\restrharpoon \kappa$ from $(R_N,X_N)\restrharpoon\kappa_N$ to $S_M$ is successful on all inputs of $(R_N,X_N)\restrharpoon\kappa_N$. Indeed, if $x'$ is a bad input, from \autoref{lemma:restriction-property} we have that $x$, obtained by substituting $x'$ into $\kappa$, is also bad. But it is impossible, since a bad input must extend some $\tau \in T$, which would contradict inconsistency of $\kappa$. Now, if such $\kappa$ exists for all sufficiently large $N$, then $\{(R_N,X_N)\restrharpoon \kappa_N\}_N$ is reducible to $S$, which together with conditions \ref{item:main-isomorphysm} and \ref{item:main-nonreducibility} leads to a contradiction.

    Finally, from condition \ref{item:main-extendability-vs-consistency} we can conclude that for some $\delta > 0$ for infinitely many $N$ at least $1 - o(N^{-\delta})$ of all $\kappa \in \mathcal{P}_2$ are not just consistent, but even extend some $\tau \in T$. Since from condition \ref{item:distibutions} random $x$ can be generated by extending random $\kappa \in \mathcal{P}_2$, at least $1 - o(N^{-\delta})$ of all $x \in X_N$ also extend some $\tau\in T$, which means that they are bad. So, at most $o(N^{-\delta})$ of all inputs are good. But in \autoref{lemma:randomized-reducibility-alt} for randomized reduction to exist we require a success rate of at least $\frac{1}{\poly(\log N)}$.
\end{proof}
\subsection{Auxiliary Results}
The following results are useful for showing that the premise of the main theorem holds. In particular, the following lemma shows that under certain conditions on $R \in \TFNP$ and its inputs $X_1$ and $X_2$, for any $S \in \TFNP$, $(R,X_1)$ is reducible to $S$ if and only if $(R,X_2)$ is reducible to $S$. This is useful to get condition \ref{item:main-nonreducibility} of the theorem.

\begin{lemma} \label{lemma: dence-input-subset}
    Let $R = (R_N)$ be any search problem from $\TFNP$, let $X$ and $X'$ be inputs of $R$ such that $X_N \subset X'_N$ for all $N$. Assume that for any $p(N) = \poly(\log N)$ for all sufficiently large $N$ for any non-witnessing partial assignment $\rho$ of size at most $p(N)$ it holds that if $\rho$ can be extended to $x' \in X'_N$, it can also be extended to $x \in X_N$. Then, for any $S \in \TFNP$, $(R, X)$ is reducible to $S$ if and only if $(R, X')$ is reducible to $S$.
\end{lemma}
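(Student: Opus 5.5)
One direction is immediate: since $X_N \subset X'_N$, a reduction from $(R,X')$ to $S$ is successful on all inputs of $X'_N$, hence on all of $X_N$, and it is also a reduction from $(R,X)$ to $S$. (Here $S$ carries no input restriction, so the condition $x\in X \Rightarrow f(x)\in Y$ is vacuous.) The content is the converse. Suppose $(f,g)$ is a reduction from $(R,X)$ to $S$ of complexity $p(N)=\poly(\log N)$. I will show that for all sufficiently large $N$ the same pseudo-reduction, after a harmless modification, is successful on every input of $X'_N$. Finitely many small $N$ can be handled by brute force: any fixed $N$ admits a constant-complexity pseudo-reduction that queries the whole input, so only the asymptotic regime matters.

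The key step uses the structure of bad inputs. Following \autoref{subsection:reduction-properties}, I first make $(f,g)$ honest and perceptive. This costs only a polylogarithmic blowup, so the depth becomes some $d = q(N) = \poly(\log N)$. Success is preserved on every input where the original was successful, in particular on all of $X_N$.

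Now suppose some $x' \in X'_N$ is bad for the modified pseudo-reduction. By \autoref{lemma:bad-input-structure}, there is a non-witnessing partial assignment $\rho \sqsubset x'$ with $|\rho| \le d$ such that every extension of $\rho$ is bad. Since $\rho$ extends to $x' \in X'_N$, is non-witnessing, and has size at most $q(N)$, the hypothesis of the lemma (applied with the polynomial $q$, for large $N$) yields some $x \in X_N$ with $\rho \sqsubset x$. Then $x$ is bad, contradicting the fact that the pseudo-reduction succeeds on all of $X_N$. Hence the modified sequence succeeds on all of $X'_N$ for large $N$, and it has complexity $\poly(\log N)$, so it is a reduction from $(R,X')$ to $S$.

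The main subtlety is bookkeeping the polynomial. The hypothesis must be invoked for the depth after the honest/perceptive transformation, not the original $p(N)$. This is fine because the hypothesis holds for every polylogarithmic bound. Beyond that, I must check that the transformation genuinely preserves success on $X_N$ and that \autoref{lemma:bad-input-structure} applies to inputs of $R_N$ without reference to $X$. Both hold, since the lemma is stated for arbitrary inputs.
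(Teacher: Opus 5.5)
Your proof is correct and is essentially the paper's argument. The paper states it as the contrapositive (if $(R,X')$ is not reducible to $S$ then neither is $(R,X)$), but it uses the same ingredients: normalizing to an honest and perceptive pseudo-reduction, taking the non-witnessing $\rho$ from \autoref{lemma:bad-input-structure}, and using the hypothesis to extend $\rho$ to a bad input in $X_N$; your direct version is, if anything, more explicit about the trivial direction, about applying the hypothesis at the post-normalization depth, and about the finitely many small $N$.
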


\begin{proof}
    It suffices to show that if $(R,X')$ is not reducible to $S$, then  $(R,X)$ is also not reducible to $S$. In addition, we can consider only honest and perceptive pseudo-reductions. Let $p(N) = \poly(\log N)$, let $(f,g)$ be an arbitrary honest and perceptive pseudo-reduction from $(R_N, X')$ to $S_M$ with the complexity at most $p(N)$. For infinitely many $N$, there is a bad input $x' \in X'_N$ for any such pseudo-reduction. Then we can take the partial assignment $\rho \sqsubset x'$ from \autoref{lemma:bad-input-structure}. For sufficiently large $N$, $\rho$ can be extended to $x \in X_N$, which means that $x$ is also bad. That concludes the proof.
\end{proof}

If we take $\mathcal{P}_2$ and $X_N$ as in \autoref{theorem:main}, we can associate with them graph $G$ with $V(G)=\mathcal{P}_2$, and $E(G)$ being pairs of elements from $\mathcal{P}_2$ that are consistent relative to $X_N$. If $\mathcal{P}_2$ and $X_N$ are very symmetric, this graph has useful properties. These properties are demonstrated in the following lemma:

\begin{lemma}\label{lemma:good-graph}
    Let $G$ be a graph, $V_0 \sqcup \ldots \sqcup V_n \subset V(G)$. Let there be $d_{0\to k}, d_{k \to 0} > 0$ for all $k \in [n]$ such that:
    \begin{itemize}
        \item $\forall v \in V_0: |N(v) \cap V_k| = d_{0\to k}$;
        \item $\forall v \in V_k: |N(v) \cap V_0| = d_{k\to 0}$.
    \end{itemize}
    There $N(\cdot)$ is the neighborhood function. Also, let $N(V_0) \subset V_0 \sqcup \ldots \sqcup V_n$. Then for any nonempty $V' \subset V_0$, it holds:
    \begin{equation*}
        \frac{|N(V')\setminus V_0|}{|V'|} \geq \frac{|N(V_0)\setminus V_0|}{|V_0|}
    \end{equation*}
\end{lemma}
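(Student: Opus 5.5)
Since $N(V_0)\subset V_0\sqcup\dots\sqcup V_n$, everything outside $V_0$ in a neighborhood of a subset of $V_0$ lies in $V_1\sqcup\dots\sqcup V_n$. The plan is a layer-by-layer double count: for each $k\in[n]$ show $|N(V')\cap V_k|/|V'|\ge |N(V_0)\cap V_k|/|V_0|$, then sum over $k$ using $N(V')\setminus V_0=\bigsqcup_{k}(N(V')\cap V_k)$ and the analogous decomposition for $V_0$.

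For a fixed $k$, count the edges between $V'$ and $V_k$. Each $v\in V'$ has exactly $d_{0\to k}$ neighbors in $V_k$, so there are $|V'|\,d_{0\to k}$ such edges (pairs $(v,w)$ with $v\in V'$, $w\in V_k$ adjacent). All their $V_k$-endpoints lie in $N(V')\cap V_k$, and each $w\in V_k$ has exactly $d_{k\to 0}$ neighbors in $V_0\supset V'$. Hence at most $d_{k\to 0}$ edges land on each such $w$, giving
\begin{equation*}
|N(V')\cap V_k|\ \ge\ \frac{|V'|\,d_{0\to k}}{d_{k\to 0}}.
\end{equation*}
For $V'=V_0$ this is an equality: every $w\in V_k$ with a neighbor in $V_0$ has exactly $d_{k\to 0}$ neighbors there, so $|N(V_0)\cap V_k|\,d_{k\to 0}=|V_0|\,d_{0\to k}$. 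Here $d_{k\to0}>0$ is used both to divide and to guarantee every $w\in V_k$ actually lies in $N(V_0)$ (so in fact $N(V_0)\cap V_k=V_k$, though we don't need this). Dividing by $|V'|$ and $|V_0|$ gives the per-layer inequality.

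Summing over $k\in[n]$ yields $|N(V')\setminus V_0|/|V'|\ge \sum_k d_{0\to k}/d_{k\to0}=|N(V_0)\setminus V_0|/|V_0|$. One subtlety to check is loops and adjacency inside $V_0$: edges within $V_0$ are irrelevant because we only count neighbors in $V_k$ for $k\ge1$, which are disjoint from $V_0$. There is no real obstacle. The only care needed is to count over ordered adjacent pairs, so that the upper bound $d_{k\to0}$ per $w$ applies even though $w$ may have neighbors outside $V'$.
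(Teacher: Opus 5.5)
Your proof is correct and follows essentially the same route as the paper's: it splits $N(V')\setminus V_0$ into the layers $N(V')\cap V_k$, bounds each layer below by $|V'|\,d_{0\to k}/d_{k\to 0}$, notes that equality holds for $V'=V_0$, and sums over $k\in[n]$. The only difference is that you spell out the double-counting argument, including the handling of edges inside $V_0$, which the paper leaves implicit.
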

\begin{proof}
    \begin{gather*}
        |N(V')\setminus V_0| = \sum_{k \in [n]} |N(V')\cap V_k| \geq \sum_{k\in [n]} |V'|\frac{d_{0\to k}}{d_{k\to 0}} = |V'|\sum_{k\in [n]} \frac{d_{0\to k}}{d_{k\to 0}}\\
        |N(V_0)\setminus V_0| = \sum_{k \in [n]} |N(V_0)\cap V_k| = \sum_{k\in [n]} |V_0|\frac{d_{0\to k}}{d_{k\to 0}} = |V_0|\sum_{k\in [n]} \frac{d_{0\to k}}{d_{k\to 0}}\\
         \frac{|N(V')\setminus V_0|}{|V'|} \geq \sum_{k\in [n]} \frac{d_{0\to k}}{d_{k\to 0}} = \frac{|N(V_0)\setminus V_0|}{|V_0|}.
    \end{gather*}
\end{proof}

For nonempty $T \subset \mathcal{P}_1$ let us look at the ratio between the number of $\kappa \in \mathcal{P}_2$ that are consistent with some $\tau \in T$ relative to $X_N$ and the number of $\kappa \in \mathcal{P}_2$ that extend some $\tau \in T$. The condition \ref{item:main-extendability-vs-consistency} of \autoref{theorem:main} requires this ratio to be sufficiently close to $1$. The following lemma uses the previous result to show that under certain conditions, if we already showed that this ratio is close to 1 for $|T|=1$, the same holds for arbitrary $T$.

\begin{lemma}\label{lemma:tiling-optimality}
    Let $X_N \subset \Sigma_N^{l_N}$, let $\mathcal{P}_1$ and $\mathcal{P}_2$ be sets of partial assignments over $X_N$. Let it be true for some $\gamma > 1$ that for every $\tau \in \mathcal{P}_1$, it holds:
    \begin{enumerate}
        \item For every $x \in X_N$ such that $\tau \sqsubset x$, there is $\kappa \in \mathcal{P}_2$ such that $\tau \sqsubset \kappa \sqsubset x$.
        \item $|\{\kappa \in \mathcal{P}_2 \mid \kappa \agree{X_N} \tau\}| = \gamma|\{\kappa \in \mathcal{P}_2 \mid \kappa \sqsupset \tau\}|$
        \item For graph $G$ with $V(G) = \mathcal{P}_2$, $E(G) = \{(\kappa_1, \kappa_2) \mid \kappa_1 \agree{X_N} \kappa_2\}$, $V_0 = \{\kappa \in \mathcal{P}_2 \mid \kappa \sqsupset \tau\}$,  $V_k = \{\kappa \in \mathcal{P}_2 \mid \kappa \agree{X_N} \tau, \text{ $\kappa$ and $\tau$ have $|\tau|-k$ common non-$*$ positions} \}$, all conditions of \autoref{lemma:good-graph} are met.
    \end{enumerate}
    Then, for any $T \subset \mathcal{P}_1$, it holds that:
    \begin{gather*}
        |\{\kappa \in \mathcal{P}_2 \mid  \exists \tau \in T: \kappa \agree{X_N} \tau\}| \leq \gamma|\{\kappa \in \mathcal{P}_2 \mid \exists \tau \in T: \kappa \sqsupset \tau\}|.
    \end{gather*}
\end{lemma}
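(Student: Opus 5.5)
Write $U(T) = \{\kappa \in \mathcal{P}_2 \mid \exists \tau\in T: \kappa \sqsupset \tau\}$ and $C(T) = \{\kappa \in \mathcal{P}_2 \mid \exists \tau \in T: \kappa \agree{X_N} \tau\}$. Clearly $U(T)\subset C(T)$, so the goal is to show $|C(T)\setminus U(T)| \le (\gamma-1)|U(T)|$. The plan is to translate everything into the consistency graph $G$ of condition 3 and use the degree regularity from \autoref{lemma:good-graph}.

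\textbf{Step 1 (consistency with $\tau$ is adjacency to $V_0(\tau)$).} Write $V_0(\tau)$ and $V_k(\tau)$ for the sets of condition 3. If $\kappa \agree{X_N} \tau$, pick $x\in X_N$ extending both. By condition 1 there is $\kappa'\in\mathcal{P}_2$ with $\tau\sqsubset\kappa'\sqsubset x$. Then $\kappa' \in V_0(\tau)$, and $\kappa'$ is adjacent to (or equal to) $\kappa$ in $G$, witnessed by the same $x$. Conversely, the hypothesis $N(V_0(\tau))\subset V_0(\tau)\sqcup\dots\sqcup V_n(\tau)$ says every neighbour of $V_0(\tau)$ is consistent with $\tau$. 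Hence $C(\{\tau\}) = V_0(\tau)\cup N(V_0(\tau))$, and $C(T)\setminus U(T) \subset N(U(T))\setminus U(T)$.

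\textbf{Step 2 (single $\tau$).} For one $\tau$, condition 2 together with Step 1 gives $|N(V_0(\tau))\setminus V_0(\tau)| = (\gamma-1)|V_0(\tau)|$. The computation in the proof of \autoref{lemma:good-graph} shows this equals $\sum_k |V_0(\tau)|\,d_{0\to k}/d_{k\to 0}$. Hence $\sum_k d_{0\to k}(\tau)/d_{k\to 0}(\tau) = \gamma-1$ for every $\tau \in \mathcal{P}_1$.

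\textbf{Step 3 (fractional charging).} For general $T$, each $\kappa\in C(T)\setminus U(T)$ is assigned one $\tau(\kappa)\in T$ with $\kappa\in V_k(\tau(\kappa))$ for some $k\ge 1$. The charge of $\kappa$ is then spread evenly as $1/d_{k\to 0}$ to each of its $d_{k\to 0}$ neighbours in $V_0(\tau(\kappa))\subset U(T)$. If every $u\in U(T)$ receives total charge at most $\gamma-1$, the claim follows by summing.

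A fixed $u$ can receive charge only from $\kappa$ that are neighbours of $u$ of some layer $k$ relative to $\tau(\kappa)$, with $u\in V_0(\tau(\kappa))$. For a single $\tau$, the charge $u$ receives is at most $\sum_k d_{0\to k}/d_{k\to 0}=\gamma-1$, by Step 2.

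\textbf{Main obstacle.} The danger is that $u$ extends several $\tau\in T$ and gets charged through each of them. I would first try to fix this by choosing the assignment $\tau(\kappa)$ cleverly: order $T$ as $\tau_1,\dots,\tau_m$ and add the $\tau_i$ one by one. At step $i$, only the newly consistent vertices $C(\{\tau_i\})\setminus C(\{\tau_1,\dots,\tau_{i-1}\})$ are charged, and they are charged only to $V' = V_0(\tau_i)\setminus U(\{\tau_1,\dots,\tau_{i-1}\})$. This needs two things.

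\begin{itemize}
\item Any newly consistent vertex adjacent to $V_0(\tau_i)\cap U_{\text{old}}$ was already counted. I would argue this via the transitivity-like fact in Step 1.
\item The ratio $|N(V')\setminus V_0(\tau_i)|/|V'|$ is controlled. Here \autoref{lemma:good-graph} is used to compare $V'$ with all of $V_0(\tau_i)$.
\end{itemize}

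If the inequality from \autoref{lemma:good-graph} turns out to point the wrong way for this induction, the fallback is a Hall/max-flow formulation. The bipartite graph between $C(T)\setminus U(T)$ and $U(T)$ would have capacities $\gamma-1$ at vertices of $U(T)$, and the expansion inequality of \autoref{lemma:good-graph} would serve as the Hall-type certificate that the required fractional assignment exists.
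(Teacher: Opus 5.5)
\textbf{Gap: the ordered charging is left unresolved, and your second bullet asks for the wrong inequality.} Your Steps 1--3 are correct, but you stop at the decisive point. For the new part $V'=V_0(\tau_i)\setminus U_{\text{old}}$ you want an upper bound on $|N(V')\setminus V_0(\tau_i)|/|V'|$. However, \autoref{lemma:good-graph} only gives the lower bound $|N(V')\setminus V_0(\tau_i)|\ge(\gamma-1)|V'|$, and the upper bound is false in general. A single vertex $v\in V_0(\tau_i)$ has exactly $\sum_k d_{0\to k}$ neighbours outside $V_0(\tau_i)$. This exceeds $\gamma-1=\sum_k d_{0\to k}/d_{k\to 0}$ as soon as some $d_{k\to 0}>1$; in the applications $\gamma-1=o(1)$ while $\sum_k d_{0\to k}\ge 1$.

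The Hall/max-flow fallback does not close the gap either. Its Hall condition says every set $S\subset C(T)\setminus U(T)$ has at least $|S|/(\gamma-1)$ neighbours in $U(T)$. That is a statement about expansion from outside into a union of several sets $V_0(\tau)$, and it is at least as strong as the inequality you are proving. You give no derivation of it from \autoref{lemma:good-graph}, which concerns subsets of a single $V_0(\tau)$ expanding outward.

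\textbf{The fix: your first bullet already finishes the argument.} Let $\tau(\kappa)$ be the first $\tau_i$ in your order with which $\kappa$ is consistent relative to $X_N$. Since $\kappa\notin U(T)$, $\kappa\in V_k(\tau_i)$ for some $k\ge 1$. Suppose $\kappa$ were adjacent to some $u\in U(\{\tau_1,\dots,\tau_{i-1}\})$. Then a common extension $x\in X_N$ of $\kappa$ and $u$ would extend an earlier $\tau_j$, contradicting the minimality of $i$. Hence all $d_{k\to 0}$ neighbours of $\kappa$ in $V_0(\tau_i)$ lie in $V_0(\tau_i)\setminus U(\{\tau_1,\dots,\tau_{i-1}\})$. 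So each $u\in U(T)$ is charged only through the first $\tau_i$ it extends, and your single-$\tau$ bound caps its load at $\gamma-1$. Only the regular degrees are used; \autoref{lemma:good-graph} is not needed.

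\textbf{How the paper does it.} The paper inducts on $|T|$ and applies \autoref{lemma:good-graph} to the overlap $W=U(T_0)\cap V_0(\tau)$ rather than to the new part, which is the direction the lemma actually provides. Every newly consistent vertex avoids $W\cup N(W)\subset C(\{\tau\})$, and the lemma gives $|W\cup N(W)|\ge\gamma|W|$. So the increment of $|C|$ is at most $\gamma|V_0(\tau)|-\gamma|W|=\gamma|V_0(\tau)\setminus W|$, which is exactly the increment of $\gamma|U|$.
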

\begin{proof}
    The first condition gives us the following equivalence:
    \begin{claim}
    \label{claim:alternative-view-on-consistency}
        For any $\tau \in \mathcal{P}_1$ and $\kappa \in \mathcal{P}_2$:  $\kappa \agree{X_N} \tau$ $\Leftrightarrow$ $\exists \kappa' \in \mathcal{P}_2: \kappa' \sqsupset \tau \land \kappa\agree{X_N}\kappa'$. 
    \end{claim}
    
    Now, let us prove the claim of the lemma by induction on the size of $T$. For $|T| = 1$, we already know that the equality holds. Let $T = T_0 \cup \{\tau\}$ and for $T_0$ the required inequality holds. Define $F(\tau) = \{\kappa \in \mathcal{P}_2 \mid \kappa \sqsupset \tau\}$, $F_w(\tau) = \{\kappa \in \mathcal{P}_2 \mid \kappa \agree{X_N} \tau\}$, $N(\kappa') = \{\kappa \in \mathcal{P}_2 \mid \kappa \agree{X_N} \kappa'\}$. For sets of $\tau$ or $\kappa$ definitions are similar. Note that $F_w = N\circ F$ due to \autoref{claim:alternative-view-on-consistency}. In these terms, we need to prove that:
    \begin{gather*}
        |F_w(T)| \leq \gamma |F(T)|.
    \end{gather*}
    First, assume the case $F(T_0) \cap F(\tau) = \emptyset$. Then:
    \begin{gather*}
        |F_w(T)| \leq |F_w(T_0)| + |F_w(\tau)| \leq \gamma(|F(T_0)| + |F(\tau)|) = \gamma |F(T)|.
    \end{gather*}
    Now assume that $F(T_0) \cap F(\tau) = V' \neq \emptyset$, $F(\tau) \setminus V' = V''$. Then we have:
    \begin{align*}
        &|F_w(T)| = \\
        &|F_w(T_0)| + |F_w(\tau)\setminus F_w(T_0)| \leq && \text{(splitting $F_w(T)$ into 2 parts)}\\ 
        &\gamma|F(T_0)| + |F_w(\tau)\setminus F_w(T_0)| = && \text{(applying induction)}\\
        &\gamma|F(T_0)| + |F_w(\tau)\setminus N(F(T_0))| \leq && \text{(using that $F_w = N \circ F$)}\\
        &\gamma|F(T_0)| + |F_w(\tau)\setminus N(V')| = && \text{(using that $V' \subset F(T_0)$)}\\
        &\gamma|F(T_0)| + |F_w(\tau)| - |N(V')| = && \text{($F_w(\tau) = N(F(\tau)) \supset N(V')$)}\\
        &\gamma|F(T_0)| + \gamma|F(\tau)| - |N(V')| = && \text{(applying the second condition)}\\
        &\gamma|F(T_0)| + \gamma|V''| + \gamma|V'| - |N(V')| = && \text{($F(\tau) = V' \sqcup V''$)}\\
        &\gamma|F(T)| + \gamma|V'| - |N(V')|. && \text{($F(T) = F(T_0) \sqcup V''$)}
    \end{align*}
    Finally, $\gamma|F(T)| + \gamma|V'| - |N(V')| \leq \gamma|F(T)|$, since:
    \begin{align*}
        &|N(V')| \geq \\
        &|V'| + |N(V')\setminus F(\tau)| \geq &&\text{($V' \subset N(V')$ and $V' \subset F(\tau)$)}\\
        &|V'| + \frac{|V'|}{|F(\tau)|}|N(F(\tau))\setminus F(\tau)| = &&\text{(using \autoref{lemma:good-graph} with $V_0 = F(\tau)$)}\\
        &\frac{|N(F(\tau))|}{|F(\tau)|}|V'| = &&\text{($F(\tau) \subset N(F(\tau))$)}\\
        &\gamma|V'|. && \text{(applying the second condition)}
    \end{align*}
\end{proof}

\section{Applications of the Main Theorem}
\label{section:application}
\subsection{PPP}
\begin{definition}
    $\PPP$ is a class of $\TFNP$ problems that are reducible to the search problem $\Pigeon$.\\
    $\Pigeon = \{\Pigeon_N \subset \Sigma_N^{l_N}\times O_N\}$, where:\\
    $\Sigma_N = [N-1]\cup\{\bot\}$, $l_N=N$, $O_N = [N]\cup\{(i,j)\in[N]^2\mid i\neq j\}$.\\
    Positions $i \in [N]$ are called pigeons, and values $j \in [N-1]$ are called holes, $x_i = j$ represents pigeon $i$ being mapped to hole $j$.\\
    Correct solutions for input $x$: any $i$ such that $x_i = \bot$ (homeless pigeon) or $(i,j)$ such that $x_i = x_j$ (collision).
\end{definition}

Our goal is to show that there is no randomized reduction from $\Pigeon$ to $S \in \TFNP$, given that there is no deterministic reduction from $\Pigeon$ to $S \in \TFNP$. To apply \autoref{theorem:main}, we first need to come up with a set of inputs $X_N$ for each $N$. We take $X_N$ to be the set of all bijections from $[N]$ to $[N-1]\cup\{\bot\}$. Now we can apply \autoref{lemma: dence-input-subset} (with $X'$ being all inputs) to show that $(\Pigeon, X)$ is also not reducible to $S$. The conditions of the lemma hold, since any non-witnessing partial assignment contains neither a homeless pigeon nor a collision, which means it can be extended to $x \in X_N$.

Now, given $p(N)=\poly(\log N)$, we need to come up with $\mathcal{P}_1$ and $\mathcal{P}_2$ for every sufficiently large $N$. As $\mathcal{P}_1$ and $\mathcal{P}_2$ we take sets of all non-witnessing partial assignments of size $r$ and $k$, respectively (see \autoref{diagram:pigeon-structure}). There $r = p(N)$, $k = N - \floor{N^{\eps}}$ for $\eps = \frac 14$. Let us go through the sub-items of \autoref{theorem:main}:
\begin{itemize}
    \item \ref{item:main-extendability}: For any non-witnessing partial assignment $\rho$ with $|\rho| \leq p(N)$ we can take $A_\rho = \{\rho' \in \mathcal{P}_1 \mid \rho' \sqsupset \rho\}$. It is clear that $x$ extends $\rho$ if and only if it extends some $\rho' \in A_\rho$.
    \item \ref{item:main-isomorphysm}: Note that for every $N$ there is $\hat{N} = O(N^{\frac{1}{\eps}})$ such that $\floor{\hat{N}^{\eps}} = N$. Also, for any $\kappa \in \mathcal{P}_2(\hat{N})$, the search problem $(\Pigeon_{\hat{N}}, X_{\hat{N}})\restrharpoon \kappa$ is equivalent to $(\Pigeon_N, X_N)$ up to an alphabet renaming, which means there is a trivial always successful pseudo-reduction from $(\Pigeon_N, X_N)$ to $(\Pigeon_{\hat{N}}, X_{\hat{N}})\restrharpoon \kappa$, with complexity $\log \hat{N} = O(\log N)$. From this follows the required reducibility.
    \item \ref{item:distibutions}: By the symmetry of $\mathcal{P}_2$ and $X_N$.
    \item \ref{item:main-extendability-vs-consistency}: For this one, we can use \autoref{lemma:tiling-optimality}. Note that due to symmetry the value $\gamma \coloneqq |\{\kappa \in \mathcal{P}_2 \mid  \kappa \agree{X_N} \tau\}|/|\{\kappa \in \mathcal{P}_2 \mid \kappa \sqsupset \tau\}|$ does not depend on the choice of $\tau \in \mathcal{P}_1$. Other conditions of the lemma also hold. Finally, simple combinatorial estimates (see \autoref{appendix:pigeon-calculations}) show that $\gamma = (1 + O(N^{-\frac 14})) \Rightarrow \gamma^{-1} = (1 - o(N^{-\frac 15}))$, which concludes the proof.
\end{itemize}

\begin{figure}[ht]
    \centering
    \captionsetup[subfigure]{justification=centering}
    \begin{subfigure}{0.3\textwidth}
        \centering
        \begin{tikzpicture}[node distance=1cm]
            \node[square] (a11){};
            \node[circ,above of=a11] (a21){};
            \node[square,right of=a11] (a12){};
            \node[circ] (a22) at (a21 -| a12) {};
            \draw[decoration={brace,mirror,raise=6pt},decorate](a11.west) -- node[below=6pt] {$r$} (a12.east);
            \draw[arrow](a11) -- (a21);
            \draw[arrow](a12) -- (a22);
            \node at ($(a11)!0.5!(a12)$) {$\ldots$};
            \node at ($(a21)!0.5!(a22)$) {$\ldots$};

            \node[square,right of=a12,node distance=0.5cm] (a13) {};
            \node[above of=a13,node distance=\shift1] {$*$};
            \node[square,right of=a13] (a14) {};
            \node[above of=a14,node distance=\shift1] {$*$};
            \node at ($(a13)!0.5!(a14)$) {$\ldots$};
            \node[square,right of=a14,node distance=0.5cm] (a15) {};
            \node[above of=a15,node distance=\shift1] {$*$};
            \draw[decoration={brace,mirror,raise=6pt},decorate](a13.west) -- node[below=6pt] {$N-r$} (a15.east);

            \node[circ] (a23) at (a22 -| a13) {};
            \node[circ] (a24) at (a23 -| a14) {};
            \node at ($(a23)!0.5!(a24)$) {$\ldots$};
            \draw[decoration={brace,raise=6pt},decorate](a23.west) -- node[above=6pt] {$N-1-r$} (a24.east);
        \end{tikzpicture}
        \caption{The structure of $\tau \in \mathcal{P}_1$.}
    \end{subfigure}
    \hfill
    \begin{subfigure}{0.3\textwidth}
        \centering
        \begin{tikzpicture}[node distance=1cm]
            \node[square] (a11){};
            \node[circ,above of=a11] (a21){};
            \node[square,right of=a11] (a12){};
            \node[circ] (a22) at (a21 -| a12) {};
            \draw[decoration={brace,mirror,raise=6pt},decorate](a11.west) -- node[below=6pt] {$k$} (a12.east);
            \draw[arrow](a11) -- (a21);
            \draw[arrow](a12) -- (a22);
            \node at ($(a11)!0.5!(a12)$) {$\ldots$};
            \node at ($(a21)!0.5!(a22)$) {$\ldots$};

            \node[square,right of=a12,node distance=0.5cm] (a13) {};
            \node[above of=a13,node distance=\shift1] {$*$};
            \node[square,right of=a13] (a14) {};
            \node[above of=a14,node distance=\shift1] {$*$};
            \node at ($(a13)!0.5!(a14)$) {$\ldots$};
            \node[square,right of=a14,node distance=0.5cm] (a15) {};
            \node[above of=a15,node distance=\shift1] {$*$};
            \draw[decoration={brace,mirror,raise=6pt},decorate](a13.west) -- node[below=6pt] {$N-k$} (a15.east);

            \node[circ] (a23) at (a22 -| a13) {};
            \node[circ] (a24) at (a23 -| a14) {};
            \node at ($(a23)!0.5!(a24)$) {$\ldots$};
            \draw[decoration={brace,raise=6pt},decorate](a23.west) -- node[above=6pt] {$N-1-k$} (a24.east);
        \end{tikzpicture}
        \caption{The structure of $\kappa \in \mathcal{P}_2$.}
    \end{subfigure}
    \hfill
    \begin{subfigure}{0.3\textwidth}
        \centering
        \begin{tikzpicture}[node distance=1cm]
            \node[square] (a11){};
            \node[circ,above of=a11] (a21){};
            \node[square,right of=a11] (a12){};
            \node[circ] (a22) at (a21 -| a12) {};
            \node[square,right of=a12,node distance=0.5cm] (a13) {};
            \node[above of=a13,node distance=0.35cm] {$\bot$};
            \draw[decoration={brace,mirror,raise=6pt},decorate](a11.west) -- node[below=6pt] {$N-1$} (a12.east);
            \draw[arrow](a11) -- (a21);
            \draw[arrow](a12) -- (a22);
            \node at ($(a11)!0.5!(a12)$) {$\ldots$};
            \node at ($(a21)!0.5!(a22)$) {$\ldots$};
        \end{tikzpicture}
        \caption{The structure of $x \in X_N$.}
    \end{subfigure}
    \caption{The structure of elements in $\mathcal{P}_1$, $\mathcal{P}_2$ and $X_N$ for $\Pigeon_N$ up to a permutation of pigeons and holes. Here squares represent pigeons, circles represent holes, and arrows show how pigeons are mapped to holes.}
    \label{diagram:pigeon-structure}
\end{figure}

\subsection{PPAD and PPADS}
\begin{definition}
    $\PPAD$ and $\PPADS$ are classes of $\TFNP$ problems that are reducible to search problems $\InjPigeon$ (Injective Pigeon) and $\BijPigeon$ (Bijective Pigeon), respectively.\\
    $\InjPigeon = \{\InjPigeon_N \subset \Sigma_N^{l_N}\times O_N\}$, where:\\
    $\Sigma_N = [N]\cup\{\bot\}$, $l_N=2N-1$, $O_N = [N]$.\\
    Any $x \in \Sigma_N^{l_N}$ can be interpreted as a pair $(s,p) \in \Sigma_N^{N}\times \Sigma_N^{N-1}$. We count $i \in [N]$ as the correct solution for $x = (s,p)$ if either $s_i \in \{\bot, N\}$ or $p_{s_i} \neq i$. \\
    $\BijPigeon$ is defined in the same way, except that we also count $i \in [N]$ as the correct solution for $x = (s,p)$ if $i < N$ and either $p_i = \bot$ or $s_{p_i} \neq i$.
\end{definition}

The search problems $\InjPigeon$ and $\BijPigeon$ are introduced and shown to be complete in the corresponding classes in \cite{GHJ22}. For original definitions of $\PPAD$ and $\PPADS$, see \cite{Pap94}. These problems are similar to $\Pigeon$: there are $N$ pigeons and $N-1$ holes. The difference is that there is mapping not only from pigeons to holes but also from holes to pigeons. We say that pigeon $i$ is homeless, if $s_i \in \{N, \bot\}$; hole $j$ is empty if $p_j = \bot$, and pigeon-hole pair is a pair $(i,j)$ such that $s_i = j$, $p_j = i$. 

As with $\Pigeon$, our goal is to show the equivalence of deterministic and randomized reducibility from these problems to any $S \in \TFNP$. First, we restrict our attention to consistent inputs: $X'_N = \{x=(s,p) \mid (\forall i\in[N]$ $\forall j\in[N-1]: s_i = j \Leftrightarrow p_j = i) \land (\forall i\in [N] : s_i \neq N)\}$. It is easy to see that both of these problems are reducible to themselves, restricted to such inputs. From now on, we proceed with each problem separately.

\subsubsection{PPAD}

First, we can further refine $X'_N$ to $X_N$ so that there are no empty holes: $X_N = \{(s,p) \in X'_N \mid \forall j: p_j \neq \bot\}$. As with $\Pigeon$, \autoref{lemma: dence-input-subset} can be applied here: any non-witnessing $\rho$, extendable to $x' \in X'$, consists only of pigeon-hole pairs, which means it is extendable to $x \in X_N$.

Similarly to $\Pigeon$, as $\mathcal{P}_1$ and $\mathcal{P}_2$ we take sets of all non-witnessing partial assignments of size $2r$ and $2k$, respectively, with the additional requirement on partial assignments: $s_i = j \Leftrightarrow p_j = i$. There $r = p(N)$, $k = N - \floor{N^{\eps}}$ for $\eps = \frac 14$. Just like for $\Pigeon$, the elements of these sets represent partially defined injections from $[N]$ to $[N-1]$ with domain sizes $r$ and $k$, respectively. As it can be seen from \autoref{diagram:pigeon-structure} and \autoref{diagram:bij-pigeon-structure}, $\mathcal{P}_1, \mathcal{P}_2$ and $X_N$ for $\Pigeon_N$ and for $\BijPigeon_N$ look very similar. From now on, the proof almost exactly repeats the proof for $\Pigeon$, including combinatorial calculations. 

\begin{figure}[ht]
    \centering
    \captionsetup[subfigure]{justification=centering}
    \begin{subfigure}{0.3\textwidth}
        \centering
        \begin{tikzpicture}[node distance=1cm]
            \node[square] (a11){};
            \node[circ,above of=a11] (a21){};
            \node[square,right of=a11] (a12){};
            \node[circ] (a22) at (a21 -| a12) {};
            \draw[decoration={brace,mirror,raise=6pt},decorate](a11.west) -- node[below=6pt] {$r$} (a12.east);
            \draw[bi-arrow](a11) -- (a21);
            \draw[bi-arrow](a12) -- (a22);
            \node at ($(a11)!0.5!(a12)$) {$\ldots$};
            \node at ($(a21)!0.5!(a22)$) {$\ldots$};

            \node[square,right of=a12,node distance=0.5cm] (a13) {};
            \node[above of=a13,node distance=\shift1] {$*$};
            \node[square,right of=a13] (a14) {};
            \node[above of=a14,node distance=\shift1] {$*$};
            \node at ($(a13)!0.5!(a14)$) {$\ldots$};
            \node[square,right of=a14,node distance=0.5cm] (a15) {};
            \node[above of=a15,node distance=\shift1] {$*$};
            \draw[decoration={brace,mirror,raise=6pt},decorate](a13.west) -- node[below=6pt] {$N-r$} (a15.east);

            \node[circ] (a23) at (a22 -| a13) {};
            \node[below of=a23,node distance=\shift1] {$*$};
            \node[circ] (a24) at (a23 -| a14) {};
            \node[below of=a24,node distance=\shift1] {$*$};
            \node at ($(a23)!0.5!(a24)$) {$\ldots$};
            \draw[decoration={brace,raise=6pt},decorate](a23.west) -- node[above=6pt] {$N-1-r$} (a24.east);
        \end{tikzpicture}
        \caption{The structure of $\tau \in \mathcal{P}_1$.}
    \end{subfigure}
    \hfill
    \begin{subfigure}{0.3\textwidth}
        \centering
        \begin{tikzpicture}[node distance=1cm]
            \node[square] (a11){};
            \node[circ,above of=a11] (a21){};
            \node[square,right of=a11] (a12){};
            \node[circ] (a22) at (a21 -| a12) {};
            \draw[decoration={brace,mirror,raise=6pt},decorate](a11.west) -- node[below=6pt] {$k$} (a12.east);
            \draw[bi-arrow](a11) -- (a21);
            \draw[bi-arrow](a12) -- (a22);
            \node at ($(a11)!0.5!(a12)$) {$\ldots$};
            \node at ($(a21)!0.5!(a22)$) {$\ldots$};

            \node[square,right of=a12,node distance=0.5cm] (a13) {};
            \node[above of=a13,node distance=\shift1] {$*$};
            \node[square,right of=a13] (a14) {};
            \node[above of=a14,node distance=\shift1] {$*$};
            \node at ($(a13)!0.5!(a14)$) {$\ldots$};
            \node[square,right of=a14,node distance=0.5cm] (a15) {};
            \node[above of=a15,node distance=\shift1] {$*$};
            \draw[decoration={brace,mirror,raise=6pt},decorate](a13.west) -- node[below=6pt] {$N-k$} (a15.east);

            \node[circ] (a23) at (a22 -| a13) {};
            \node[below of=a23,node distance=\shift1] {$*$};
            \node[circ] (a24) at (a23 -| a14) {};
            \node[below of=a24,node distance=\shift1] {$*$};
            \node at ($(a23)!0.5!(a24)$) {$\ldots$};
            \draw[decoration={brace,raise=6pt},decorate](a23.west) -- node[above=6pt] {$N-1-k$} (a24.east);
        \end{tikzpicture}
        \caption{The structure of $\kappa \in \mathcal{P}_2$.}
    \end{subfigure}
    \hfill
    \begin{subfigure}{0.3\textwidth}
        \centering
        \begin{tikzpicture}[node distance=1cm]
            \node[square] (a11){};
            \node[circ,above of=a11] (a21){};
            \node[square,right of=a11] (a12){};
            \node[circ] (a22) at (a21 -| a12) {};
            \node[square,right of=a12,node distance=0.5cm] (a13) {};
            \node[above of=a13,node distance=0.35cm] {$\bot$};
            \draw[decoration={brace,mirror,raise=6pt},decorate](a11.west) -- node[below=6pt] {$N-1$} (a12.east);
            \draw[bi-arrow](a11) -- (a21);
            \draw[bi-arrow](a12) -- (a22);
            \node at ($(a11)!0.5!(a12)$) {$\ldots$};
            \node at ($(a21)!0.5!(a22)$) {$\ldots$};
        \end{tikzpicture}
        \caption{The structure of $x \in X_N$.}
    \end{subfigure}
    \caption{The structure of elements in $\mathcal{P}_1$, $\mathcal{P}_2$ and $X_N$ for $\BijPigeon_N$ up to a permutation of pigeons and holes. Here squares represent pigeons, circles represent holes, and arrows show how pigeons and holes are mapped to each other.}
    \label{diagram:bij-pigeon-structure}
\end{figure}

\subsubsection{Obstacles with PPADS}
It may seem that $\InjPigeon$ can be handled in the same way as $\BijPigeon$. However, unlike $\BijPigeon$, for $\InjPigeon$, partial assignments with empty holes do not witness an answer. From this it follows that if we want to apply \autoref{theorem:main}, we require that elements of sets $\mathcal{P}_1$, $\mathcal{P}_2$ and $X_N$ contain empty holes. It seems that this difference is significant, since we were unable to come up with such $\mathcal{P}_1$, $\mathcal{P}_2$ and $X_N$ that all conditions of \autoref{theorem:main} hold.

\subsection{t-PPP}

\begin{definition}
    For nondecreasing $t = t(N)$, $2 \leq t \leq \poly(\log N)$, $t$-$\PPP$ is a class of $\TFNP$ problems that are reducible to the search problem $t$-$\Pigeon = \{t(N)\text{-}\Pigeon_N\}_N$. $t\text{-}\Pigeon_N$ is defined as follows:\\
    $\Sigma_N = [N] \cup \{\bot\}$, $l_N = (t-1)N + 1$, $O_N = [l_N]\cup\{(i_1,\ldots,i_t) \in [l_N]^t\mid \text{$i_s$ are pairwise distinct}\}$.\\
    Correct solution for input $x$: any $i$ such that $x_i = \bot$ or $(i_1, \ldots, i_t)$ such that $x_{i_1} = \ldots = x_{i_t}$.
\end{definition}

These problems were introduced in \cite{Jain2024OnPP}. There, it was shown that for $n = \log N$, $t_1(n), t_2(n) = \poly(n)$, if $t_1(n) \leq t_2(n^c)$ and $t_2(n) \leq t_1(n^c)$ for some $c$ and all sufficiently large $N$, then $t_1$-$\Pigeon$ and $t_2$-$\Pigeon$ are reducible to each other. Using that, we may consider only $t(N) \leq \sqrt{\log N}$. Indeed, since $t(n) \leq n^c$ for some $c$ and sufficiently large $N$, $\hat{t}(n) = t(\floor{n^{\frac{1}{2c}}}) \leq \sqrt{\log N}$, and $\hat{t}(n) \leq t(n) \leq \hat{t}(n^{2c+1})$.

As a hard set of inputs for $t$-$\Pigeon_N$ we take $X_{N,t}$, which is a set of such inputs $x$ that $x$ equals $\bot$ in exactly one position, and for any $a \in [N]$, there are exactly $t-1$ positions of $x$ with that value. For such inputs, the only correct solution is the position of $\bot$. Any non-witnessing partial assignment can be extended to such $x$, so we can apply \autoref{lemma: dence-input-subset} (with $X'$ being all inputs). Now we are ready to apply the main theorem with $X_N = X_{N,t(N)}$. Given $p(N) = \poly(\log N)$, as $\mathcal{P}_1$ and $\mathcal{P}_2$ we take sets of partial assignments of size $r(t-1)$ and $k(t-1)$, respectively, with the condition that these partial assignments do not contain $\bot$, and every other value occurs either $0$ or $t-1$ times (see \autoref{diagram:t-pigeon-structure}). Here $r = p(N)$, $k = N - \floor{N^{\eps}}$ for $\eps = \eps(N) = \frac{1}{4t}$.

\begin{figure}[ht]
    \centering
    \captionsetup[subfigure]{justification=centering}
    \begin{subfigure}[b]{0.5\textwidth}
    \centering
    \captionsetup[subfigure]{justification=centering}
    \begin{subfigure}[b]{\textwidth}
        \centering
        \begin{tikzpicture}[node distance=1cm]
            \node[square] (left1){};
            \node[square,right of=left1] (right1){};
            \node[] at ($(left1)!0.5!(right1)$) (mid1) {$\ldots$};
            \node[circ,above of=mid1] (up1) {};
            \draw[decoration={brace,mirror,raise=6pt},decorate](left1.west) -- node[below=6pt] {$t-1$} (right1.east);
            \draw[arrow](left1) -- (up1);
            \draw[arrow](right1) -- (up1);

            \node[square, right of=right1] (left2){};
            \node[square,right of=left2] (right2){};
            \node[xshift=0.2mm] at ($(left2)!0.5!(right2)$) (mid2) {$\dots$};
            \node[circ,above of=mid2] (up2) {};
            \draw[decoration={brace,mirror,raise=6pt},decorate](left2.west) -- node[below=6pt] {$t-1$} (right2.east);
            \draw[arrow](left2) -- (up2);
            \draw[arrow](right2) -- (up2);

            \node[] at ($(mid1)!0.5!(up1)$) (center1) {};
            \node[] at ($(mid2)!0.5!(up2)$) (center2) {};
            \node[] at ($(center1)!0.5!(center2)$) {\LARGE $\ldots$};
            \draw[decoration={brace,raise=6pt},decorate](up1.west) -- node[above=6pt] {$r$} (up2.east);

            \node[square,right of=right2,node distance=0.5cm] (ld) {};
            \node[above of=ld,node distance=\shift1] {$*$};
            \node[square,right of=ld] (rd) {};
            \node[above of=rd,node distance=\shift1] {$*$};
            \node at ($(ld)!0.5!(rd)$) {$\ldots$};
            \draw[decoration={brace,mirror,raise=6pt},decorate](ld.west) -- node[below=6pt, xshift=0.6cm] {\small $(N-r)(t-1)+1$} (rd.east);

            \node[circ,right of=up2,node distance=0.5cm] (lu) {};
            \node[circ,right of=lu] (ru) {};
            \node at ($(lu)!0.5!(ru)$) {$\ldots$};
            \draw[decoration={brace,raise=6pt},decorate](lu.west) -- node[above=6pt] {$N-r$} (ru.east);
        \end{tikzpicture}
        \caption{The structure of $\tau \in \mathcal{P}_1$.}
    \end{subfigure}
    
    \vspace{2mm}
    
    \begin{subfigure}[b]{\textwidth}
        \centering
        \begin{tikzpicture}[node distance=1cm]
            \node[square] (left1){};
            \node[square,right of=left1] (right1){};
            \node[] at ($(left1)!0.5!(right1)$) (mid1) {$\ldots$};
            \node[circ,above of=mid1] (up1) {};
            \draw[decoration={brace,mirror,raise=6pt},decorate](left1.west) -- node[below=6pt] {$t-1$} (right1.east);
            \draw[arrow](left1) -- (up1);
            \draw[arrow](right1) -- (up1);

            \node[square, right of=right1] (left2){};
            \node[square,right of=left2] (right2){};
            \node[xshift=0.2mm] at ($(left2)!0.5!(right2)$) (mid2) {$\dots$};
            \node[circ,above of=mid2] (up2) {};
            \draw[decoration={brace,mirror,raise=6pt},decorate](left2.west) -- node[below=6pt] {$t-1$} (right2.east);
            \draw[arrow](left2) -- (up2);
            \draw[arrow](right2) -- (up2);

            \node[] at ($(mid1)!0.5!(up1)$) (center1) {};
            \node[] at ($(mid2)!0.5!(up2)$) (center2) {};
            \node[] at ($(center1)!0.5!(center2)$) {\LARGE $\ldots$};
            \draw[decoration={brace,raise=6pt},decorate](up1.west) -- node[above=6pt] {$k$} (up2.east);

            \node[square,right of=right2,node distance=0.5cm] (ld) {};
            \node[above of=ld,node distance=\shift1] {$*$};
            \node[square,right of=ld] (rd) {};
            \node[above of=rd,node distance=\shift1] {$*$};
            \node at ($(ld)!0.5!(rd)$) {$\ldots$};
            \draw[decoration={brace,mirror,raise=6pt},decorate](ld.west) -- node[below=6pt, xshift=0.6cm] {\small $(N-k)(t-1)+1$} (rd.east);

            \node[circ,right of=up2,node distance=0.5cm] (lu) {};
            \node[circ,right of=lu] (ru) {};
            \node at ($(lu)!0.5!(ru)$) {$\ldots$};
            \draw[decoration={brace,raise=6pt},decorate](lu.west) -- node[above=6pt] {$N-k$} (ru.east);
        \end{tikzpicture}
        \caption{The structure of $\kappa \in \mathcal{P}_2$.}
    \end{subfigure}
    \end{subfigure}
    \hfill
    \begin{subfigure}[b]{0.4\textwidth}
    \centering
    \captionsetup[subfigure]{justification=centering}  
    \begin{subfigure}[b]{\textwidth}
        \centering
        \begin{tikzpicture}[node distance=1cm]
            \node[square] (left1){};
            \node[square,right of=left1] (right1){};
            \node[] at ($(left1)!0.5!(right1)$) (mid1) {$\ldots$};
            \node[circ,above of=mid1] (up1) {};
            \draw[decoration={brace,mirror,raise=6pt},decorate](left1.west) -- node[below=6pt] {$t-1$} (right1.east);
            \draw[arrow](left1) -- (up1);
            \draw[arrow](right1) -- (up1);

            \node[square, right of=right1] (left2){};
            \node[square,right of=left2] (right2){};
            \node[xshift=0.2mm] at ($(left2)!0.5!(right2)$) (mid2) {$\dots$};
            \node[circ,above of=mid2] (up2) {};
            \draw[decoration={brace,mirror,raise=6pt},decorate](left2.west) -- node[below=6pt] {$t-1$} (right2.east);
            \draw[arrow](left2) -- (up2);
            \draw[arrow](right2) -- (up2);

            \node[square,right of=right2,node distance=0.5cm] (lonely) {};
            \node[above of=lonely,node distance=0.35cm] {$\bot$};

            \node[] at ($(mid1)!0.5!(up1)$) (center1) {};
            \node[] at ($(mid2)!0.5!(up2)$) (center2) {};
            \node[] at ($(center1)!0.5!(center2)$) {\LARGE $\ldots$};
            \draw[decoration={brace,raise=6pt},decorate](up1.west) -- node[above=6pt] {$N$} (up2.east);
        \end{tikzpicture}
        \caption{The structure of $x \in X_N$.}
    \end{subfigure}
    \end{subfigure}
    \caption{The structure of elements in $\mathcal{P}_1$, $\mathcal{P}_2$ and $X_N$ for $t$-$\Pigeon_N$ up to a permutation of pigeons and holes. Here squares represent pigeons, circles represent holes, and arrows show how pigeons are mapped to holes.}
    \label{diagram:t-pigeon-structure}
\end{figure}

Let us go through the sub-items of \autoref{theorem:main}:
\begin{itemize}
    \item \ref{item:main-extendability}: For any non-witnessing partial assignment $\rho$ with $|\rho| \leq p(N)$ we can take $A_\rho = \{\rho' \in \mathcal{P}_1 \mid \rho' \sqsupset \rho\}$. It is clear that $x$ extends $\rho$ if and only if it extends some $\rho' \in A_\rho$.
    \item \ref{item:main-isomorphysm}: Note that for every $N$ there is $\hat{N}$ such that $\floor{\hat{N}^{\eps(\hat{N})}} = N$. Moreover:
    \begin{gather*}
        N \geq \hat{N}^{\frac{1}{8t(\hat{N})}} \geq \hat{N}^{\frac{1}{8\sqrt{\log\hat{N}}}} = 2^{\frac{\sqrt{\log\hat{N}}}{8}} \Rightarrow 
        \hat{N} \leq 2^{64(\log N)^2} = 2^{O((\log N)^2)}
    \end{gather*}
    Also, for any $\kappa \in \mathcal{P}_2(\hat{N})$, the search problem $(t(\hat{N})\text{-}\Pigeon_{\hat{N}}, X_{\hat{N},t(\hat{N})})\restrharpoon \kappa$ is equivalent to $(t(\hat{N})\text{-}\Pigeon_N, X_{N,t(\hat{N})})$ up to an alphabet renaming. Now, there is an always successful pseudo-reduction from $(t(N)\text{-}\Pigeon_N, X_{N,t(N)})$ to $(t(\hat{N})\text{-}\Pigeon_N, X_{N,t(\hat{N})})$, achieved by introducing for each of $N$ holes $(t(\hat{N}) - t(N))$ new pigeons, pointing to it. Then, there is an always successful pseudo-reduction from $(t(N)\text{-}\Pigeon_N, X_{N,t(N)})$ to $(t(\hat{N})\text{-}\Pigeon_{\hat{N}}, X_{\hat{N},t(\hat{N})})\restrharpoon \kappa$ with complexity $O(\log \hat{N}) = O((\log N)^2)$. From this follows the required reducibility.
    \item \ref{item:distibutions}: By the symmetry of $\mathcal{P}_2$ and $X_N$.
    \item \ref{item:main-extendability-vs-consistency}: For this one we can use \autoref{lemma:tiling-optimality}. Note that due to symmetry the value $\gamma \coloneqq |\{\kappa \in \mathcal{P}_2 \mid \kappa \agree{X_N} \tau\}|/|\{\kappa \in \mathcal{P}_2 \mid \kappa \sqsupset \tau\}|$ does not depend on choice of $\tau \in \mathcal{P}_1$. Other conditions of lemma also follow from symmetry. Finally, simple combinatorial estimates (see \autoref{appendix:t-pigeon-calculations}) show that $\gamma = (1 + O(N^{-\frac 12})) \Rightarrow \gamma^{-1} = (1 - o(N^{-\frac 13}))$, which concludes the proof.
\end{itemize}

\subsection{PPA}
\begin{definition}
    $\PPA$ is a class of $\TFNP$ problems that are reducible to the search problem $\Lonely = \{\Lonely_N\}_N$. $\Lonely_N$ is defined as follows:\\
    $\Sigma_N = [2N+1]\cup \{\bot\}$, $l_N = 2N+1$, $O_N = [2N+1]$. Correct solution for input $x$: any $i$ such that $x_i \in \{i,\bot\}$ or $x_{x_i} \neq i$.
\end{definition}
First, we restrict our attention to consistent inputs: $X'_N = \{x$ | $(\forall i,j\in[2N+1] : x_i = j \Leftrightarrow x_j = i) \land (\forall i\in [2N+1] : x_i \neq i)\}$. It is easy to see that $\PPA$ is reducible to itself, restricted to such inputs. Now all inputs represent graph, consisting of pairs and lonely nodes. We can further refine $X'$ to $X\subset X'$ such that there is exactly one lonely node. \autoref{lemma: dence-input-subset} shows that $X$ is sufficient for checking reducibility from $\PPA$. 

Now we move to the second part of \autoref{theorem:main}. We need to specify $\mathcal{P}_1$ and $\mathcal{P}_2$, given $p(N) = \poly(\log N)$. We take sets of partial assignments, extendable to inputs in $X_N$, that reveal exactly $r$ and $k$ pairs, respectively (see \autoref{diagram:lonely-structure}). There $r = p(N)$, $k = N - \floor{N^{\eps}}$ for $\eps = \frac 14$. Now we are ready to apply \autoref{theorem:main}. Let us go through the sub-items:
\begin{itemize}
    \item \ref{item:main-extendability}: For any non-witnessing partial assignment $\rho$ with $|\rho| \leq p(N)$ and which is extendable to $x \in X_N$, we can take $A_\rho = \{\rho' \in \mathcal{P}_1 \mid \rho' \sqsupset \rho\}$. It is clear that $x$ extends $\rho$ if and only if it extends some $\rho' \in A_\rho$.
    \item \ref{item:main-isomorphysm}: Note that for every $N$ there is $\hat{N} = O(N^{\frac{1}{\eps}})$ such that $\floor{\hat{N}^{\eps}} = N$. Also, for any $\kappa \in \mathcal{P}_2(\hat{N})$, search problem $(\Lonely_{\hat{N}}, X_{\hat{N}})\restrharpoon \kappa$ is equivalent to $(\Lonely_N, X_N)$ up to renaming the alphabet, which means there is a trivial always successful pseudo-reduction from $(\Lonely_N, X_N)$ to $(\Lonely_{\hat{N}}, X_{\hat{N}})\restrharpoon \kappa$ with complexity $\log \hat{N} = O(\log N)$. From this follows the required reducibility.
    \item \ref{item:distibutions}: By the symmetry of $\mathcal{P}_2$ and $X_N$.
    \item \ref{item:main-extendability-vs-consistency}: Here we can use \autoref{lemma:tiling-optimality}. Note that due to symmetry the value $\gamma \coloneqq |\{\kappa \in \mathcal{P}_2 \mid  \kappa \agree{X_N} \tau\}|/|\{\kappa \in \mathcal{P}_2 \mid \kappa \sqsupset \tau\}|$ does not depend on choice of $\tau \in \mathcal{P}_1$. Other conditions of lemma also follow from symmetry. Finally, simple combinatorial estimates (see \autoref{appendix:lonely-calculations}) show that $\gamma = (1 + O(N^{-\frac 14})) \Rightarrow \gamma^{-1} = (1 - o(N^{-\frac 15}))$, which concludes the proof.
\end{itemize}

\begin{figure}[ht]
    \centering
    \captionsetup[subfigure]{justification=centering}
    \begin{subfigure}{0.3\textwidth}
        \centering
        \begin{tikzpicture}[node distance=1cm]
            \node[circ] (a11){};
            \node[circ,above of=a11] (a21){};
            \node[circ,right of=a11] (a12){};
            \node[circ] (a22) at (a21 -| a12) {};
            \draw[decoration={brace,mirror,raise=6pt},decorate](a11.west) -- node[below=6pt] {$r$} (a12.east);
            \draw[bi-arrow](a11) -- (a21);
            \draw[bi-arrow](a12) -- (a22);
            \node at ($(a11)!0.5!(a12)$) {$\ldots$};
            \node at ($(a21)!0.5!(a22)$) {$\ldots$};

            \node[circ,right of=a12,node distance=0.5cm] (a13) {};
            \node[above of=a13,node distance=\shift1] {$*$};
            \node[circ,right of=a13] (a14) {};
            \node[above of=a14,node distance=\shift1] {$*$};
            \node at ($(a13)!0.5!(a14)$) {$\ldots$};
            \draw[decoration={brace,mirror,raise=6pt},decorate](a13.west) -- node[below=6pt,xshift=2mm] {$2N-2r+1$} (a14.east);
        \end{tikzpicture}
        \caption{The structure of $\tau \in \mathcal{P}_1$.}
    \end{subfigure}
    \hfill
    \begin{subfigure}{0.3\textwidth}
        \centering
        \begin{tikzpicture}[node distance=1cm]
            \node[circ] (a11){};
            \node[circ,above of=a11] (a21){};
            \node[circ,right of=a11] (a12){};
            \node[circ] (a22) at (a21 -| a12) {};
            \draw[decoration={brace,mirror,raise=6pt},decorate](a11.west) -- node[below=6pt] {$k$} (a12.east);
            \draw[bi-arrow](a11) -- (a21);
            \draw[bi-arrow](a12) -- (a22);
            \node at ($(a11)!0.5!(a12)$) {$\ldots$};
            \node at ($(a21)!0.5!(a22)$) {$\ldots$};

            \node[circ,right of=a12,node distance=0.5cm] (a13) {};
            \node[above of=a13,node distance=\shift1] {$*$};
            \node[circ,right of=a13] (a14) {};
            \node[above of=a14,node distance=\shift1] {$*$};
            \node at ($(a13)!0.5!(a14)$) {$\ldots$};
            \draw[decoration={brace,mirror,raise=6pt},decorate](a13.west) -- node[below=6pt,xshift=2mm] {$2N-2k+1$} (a14.east);
        \end{tikzpicture}
        \caption{The structure of $\kappa \in \mathcal{P}_2$.}
    \end{subfigure}
    \hfill
    \begin{subfigure}{0.3\textwidth}
        \centering
        \begin{tikzpicture}[node distance=1cm]
            \node[circ] (a11){};
            \node[circ,above of=a11] (a21){};
            \node[circ,right of=a11] (a12){};
            \node[circ] (a22) at (a21 -| a12) {};
            \draw[decoration={brace,mirror,raise=6pt},decorate](a11.west) -- node[below=6pt] {$N$} (a12.east);
            \draw[bi-arrow](a11) -- (a21);
            \draw[bi-arrow](a12) -- (a22);
            \node at ($(a11)!0.5!(a12)$) {$\ldots$};
            \node at ($(a21)!0.5!(a22)$) {$\ldots$};

            \node[circ,right of=a12,node distance=0.5cm] (a13) {};
            \node[above of=a13,node distance=0.35cm] {$\bot$};
        \end{tikzpicture}
        \caption{The structure of $x \in X_N$.}
    \end{subfigure}
    \caption{The structure of elements in $\mathcal{P}_1$, $\mathcal{P}_2$ and $X_N$ for $\Lonely_N$ up to a permutation of nodes. Here circles represent nodes, and arrows show how they are matched.}
    \label{diagram:lonely-structure}
\end{figure}

\section{Future Directions}
In addition to all these classes, we also attempted to apply this method to complete problems in classes $\PLS$, $\PPADS$, and $\PWPP$. In most cases, our method fails in combinatorial estimates (condition \ref{item:main-extendability-vs-consistency} of \autoref{theorem:main}). We hope that this method can be generalized to also work for these classes. But it is quite possible that for some of these classes, similarly to $\Lossy$, randomized reducibility from them does not imply deterministic reducibility.

\bibliography{all}
\appendix
\section{Estimate on the number of consistent \texorpdfstring{$\kappa$}{kappa} for \texorpdfstring{$\Pigeon$}{Pigeon}} \label{appendix:pigeon-calculations}
For definitions of $\mathcal{P}_1$, $\mathcal{P}_2$ and $X_N$ refer to \autoref{diagram:pigeon-structure}. There $r=\poly(\log N)$, $k=N-s$, $s = \floor{N^{\eps}}$, $\eps = \frac 14$. For an arbitrary $\tau \in \mathcal{P}_1$, let $a_l = |\{\kappa \in \mathcal{P}_2 \mid \kappa \agree{X_N} \tau$, $\kappa$ and $\tau$ have $r-l$ common non-$*$ positions$\}$. Let us write an exact formula for this value. To choose such $\kappa$, we need to:
\begin{enumerate}
    \item Choose $r-l$ pigeons that are common with $\tau$.
    \item Choose the remaining $k - (r-l)$ pigeons among $N-r$ not used in $\tau$.
    \item Choose the remaining $k - (r-l)$ holes among $N-1-r$ not used in $\tau$.
    \item Choose bijection from $k - (r-l)$ pigeons to $k - (r-l)$ holes.
\end{enumerate}
To put it into combinatorial terms:
\begin{gather*}
    a_l = \binom{r}{r-l}\binom{N-r}{k-(r-l)}\binom{N-1-r}{k-(r-l)}(k-(r-l))! = \\
    \binom{r}{l}\binom{N-r}{s-l}\binom{N-1-r}{s-l-1}(N-s-(r-l))!
\end{gather*}
Now, let us show that $a_l$ gets progressively smaller as $l$ goes from $0$ to $r$:
\begin{gather*}
    \frac{a_{l+1}}{a_l} = O(r)\cdot O\br{\frac{s}{N}}\cdot O\br{\frac{s}{N}}\cdot O(N) = o(N^{-\frac 14}).
\end{gather*}
Note that this estimate is uniform over $l \in \{0,\ldots,r-1\}$. Finally:
\begin{gather*}
    |\{\kappa \in \mathcal{P}_2 \mid \kappa \agree{X_N} \tau\}| = 
    \sum_{l = 0}^{r} a_l = a_0\left(1 + \sum_{l=1}^r \frac{a_l}{a_0}\right) = \\
    |\{\kappa \in \mathcal{P}_2 \mid \kappa \sqsupset \tau\}|\left(1 + o\br{\sum_{l=1}^r N^{-\frac l4}}\right) = \\
    |\{\kappa \in \mathcal{P}_2 \mid \kappa \sqsupset \tau\}|(1 + o(N^{-\frac 14})).
\end{gather*}

\section{Estimate on the number of consistent \texorpdfstring{$\kappa$}{kappa} for \texorpdfstring{$t$-$\Pigeon$}{t-Pigeon}} 
\label{appendix:t-pigeon-calculations}
For definitions of $\mathcal{P}_1$, $\mathcal{P}_2$ and $X_N$ refer to \autoref{diagram:t-pigeon-structure}. There $r = \poly(\log N)$, $k = N-s$, $s \coloneqq \floor{N^{\eps}}$, $\eps = \frac{1}{4t}$, $t \leq \sqrt{\log N}$. For arbitrary $\tau \in \mathcal{P}_1$, let $a_l = |\{\kappa \in \mathcal{P}_2 \mid \kappa \agree{X_N} \tau$, $\kappa$ and $\tau$ have exactly $r-l$ common non-$*$ symbols$\}|$. To choose such $\kappa$, we need to:
\begin{enumerate}
    \item Choose $r-l$ \enquote{$t-1$ pigeons pointing to one hole} groups that are common with $\tau$.
    \item Choose the remaining $k - (r-l)$ holes among $N-r$ not used in $\tau$.
    \item Choose the remaining $(t-1)(k - (r-l))$ pigeons among $(t-1)(N-r)+1$ not used in $\tau$.
    \item Choose how to map $(t-1)(k - (r-l))$ pigeons to $k - (r-l)$ holes so that at each hole point exactly $t-1$ pigeons.
\end{enumerate}
To put it into combinatorial terms:
\begin{gather*}
    a_l = \binom{r}{r-l}\binom{N-r}{k-(r-l)}\binom{(t-1)(N-r) + 1}{(t-1)(k-(r-l))}\frac{((t-1)(k-(r-l)))!}{((t-1)!)^{k-(r-l)}} = \\
    \binom{r}{l}\binom{N-r}{s-l}\binom{(t-1)(N-r) + 1}{(t-1)(s-l) + 1}\frac{((t-1)(N-s-(r-l)))!}{((t-1)!)^{N-s-(r-l)}}.
\end{gather*}
Now, let us show that $a_l$ gets progressively smaller as $l$ goes from $0$ to $r$:
\begin{gather*}
    \frac{a_{l+1}}{a_l} = O(r)\cdot O\br{\frac{s}{N}}\cdot O\br{\frac{((t-1)s)^{t-1}}{((t-1)N)^{t-1}}} \cdot O\br{\frac{((t-1)N)^{t-1}}{(t-1)!}} = \\
    O\br{\frac{rs^t(t-1)^{t-1}}{N(t-1)!}} = 
    O\br{\frac{rs^t}{N}\frac{(t-1)^{t-1}}{\br{\frac{t-1}{e}}^{t-1}}} = O\br{\frac{rs^te^{t-1}}{N}} = 
    o(N^{-\frac 12}).
\end{gather*}
Note that this estimate is uniform over $l \in \{0,\ldots,r-1\}$. Finally:
\begin{gather*}
    |\{\kappa \in \mathcal{P}_2 \mid \kappa \agree{X_N} \tau\}| = \sum_{l=0}^r a_l = 
    a_0\br{1+\sum_{l=1}^r \frac{a_l}{a_0}} = \\
    |\{\kappa \in \mathcal{P}_2 \mid \kappa \sqsupset \tau\}|\br{1 + o\br{\sum_{l=1}^r N^{-\frac l2 }}} = \\
    |\{\kappa \in \mathcal{P}_2 \mid \kappa \sqsupset \tau\}|(1+o(N^{-\frac 12})).
\end{gather*}

\section{Estimate on the number of consistent \texorpdfstring{$\kappa$}{kappa} for \texorpdfstring{$\Lonely$}{Lonely}}
\label{appendix:lonely-calculations}
For definitions of $\mathcal{P}_1$, $\mathcal{P}_2$ and $X_N$ refer to \autoref{diagram:lonely-structure}. There $r = \poly(\log N)$, $k = N-s$, $s \coloneqq \floor{N^{\eps}}$, $\eps = \frac{1}{4}$. For arbitrary $\tau \in \mathcal{P}_1$, let $a_l = |\{\kappa \in \mathcal{P}_2 \mid \kappa \agree{X_N} \tau$, $\kappa$ and $\tau$ have $r-l$ common pairs$\}|$. To choose such $\kappa$, we need to:
\begin{enumerate}
    \item Choose $r-l$ pairs that are common with $\tau$.
    \item Choose the remaining $2(k - (r-l))$ nodes among $2N-1-2r$ not used in $\tau$.
    \item Choose how to divide those $2(k - (r-l))$ nodes into pairs.
\end{enumerate}
To put it into combinatorial terms:
\begin{gather*}
    a_l = \binom{r}{r-l}\binom{2N-1-2r}{2(k-(r-l))}(2(k-(r-l)) - 1)!! = \\
    \binom{r}{l}\binom{2N-1-2r}{2s-2l-1}(2(N-s-r+l) - 1)!!
\end{gather*}
Now, let us show that $a_l$ gets progressively smaller as $l$ goes from $0$ to $r$:
\begin{gather*}
    \frac{a_{l+1}}{a_l} = O(r)\cdot O\br{\frac{s^2}{N^2}}\cdot O(N) = O\br{\frac{rs^2}{N}} = o(N^{-\frac 14}).
\end{gather*}
Note that this estimate is uniform over $l \in \{0,\ldots,r-1\}$. Finally:
\begin{gather*}
    |\{\kappa \in \mathcal{P}_2 \mid \kappa \agree{X_N} \tau\}| = \sum_{l=0}^r a_l = 
    a_0\br{1+\sum_{l=1}^r \frac{a_l}{a_0}} = \\
    |\{\kappa \in \mathcal{P}_2 \mid \kappa \sqsupset \tau\}|\br{1 + o\br{\sum_{l=1}^r N^{-\frac l4 }}} = \\
    |\{\kappa \in \mathcal{P}_2 \mid \kappa \sqsupset \tau\}|(1+o(N^{-\frac 14})).
\end{gather*}
\end{document}